\author{Amir Carmel\thanks{Part of this work was done while the author visited the Weizmann Institute.
  }\\
    Pennsylvania State University \\
    \texttt{amir6423@gmail.com}
  \and
  Robert Krauthgamer\thanks{The Harry Weinrebe Professorial Chair of Computer Science.
    Work partially supported by the Israel Science Foundation grant \#1336/23.
  }\\
  Weizmann Institute of Science\\
  \texttt{robert.krauthgamer@weizmann.ac.il}
}
\theoremstyle{plain}
\newtheorem{theorem}{Theorem}[section]
\newtheorem*{claim*}{Claim}
\newtheorem{corollary}[theorem]{Corollary}
\newtheorem{fact}[theorem]{Fact}
\theoremstyle{definition}
\newtheorem{definition}[theorem]{Definition}
\theoremstyle{remark}
\newcommand{\abs}[1]{\left\lvert#1\right\rvert}
\newcommand{\set}[1]{\{ #1 \}}
\DeclareMathOperator{\cost}{cost}
\DeclareMathOperator{\avgcost}{\overline{cost}}
\DeclareMathOperator{\opt}{opt}
\DeclareMathOperator{\avgopt}{\overline{opt}}
\DeclareMathOperator{\poly}{poly}
\DeclareMathOperator{\ord}{edf}
\DeclareMathOperator{\VCdim}{VCdim}
\def\compactify{\itemsep=0pt \topsep=0pt \partopsep=0pt \parsep=0pt}
\newcommand{\coreset}{Q}
\newcommand{\Xspace}{\mathcal{X}}
\DeclareMathOperator{\dist}{dist}
\DeclareMathOperator*{\EX}{\mathbb{E}}
\newcommand{\metricspace}{(\Xspace,\dist)}
\newcommand{\normlone}[1]{\left\lVert#1\right\rVert_1}
\newcommand{\RR}{\mathbb{R}}
\newcommand{\Rd}{\RR^d}
\newcommand{\median}{\mu}
\title{Stable coresets: Unleashing the power of \mbox{uniform} sampling}
\date{}
\begin{document}

\maketitle

\begin{abstract}
Uniform sampling is a highly efficient method for data summarization.
However, its effectiveness in producing coresets for clustering problems
is not yet well understood,
primarily because it generally does not yield a strong coreset,
which is the prevailing notion in the literature. 
We formulate \emph{stable coresets}, a notion that 
is intermediate between the standard notions of weak and strong coresets,
and effectively combines the broad applicability of strong coresets
with highly efficient constructions, through uniform sampling, of weak coresets.
Our main result is that a uniform sample of size $O(\epsilon^{-2}\log d)$
yields, with high constant probability,
a stable coreset for $1$-median in $\mathbb{R}^d$ under the $\ell_1$ metric.
We then leverage the powerful properties of stable coresets 
to easily derive new coreset constructions, all through uniform sampling, 
for $\ell_1$ and related metrics, such as Kendall-tau and Jaccard. 
We also show applications to fair rank aggregation and to approximation algorithms
for $k$-median problem in these metric spaces.
Our experiments validate the benefits of stable coresets in practice,
in terms of both construction time and approximation quality.
\end{abstract}

\section{Introduction}
Clustering is a fundamental problem in data analysis, machine learning, and optimization, facilitating various downstream tasks such as classification, anomaly detection, and efficient retrieval. In general, the input is a set of points in a metric space, and the objective is to partition this set into disjoint clusters, each sharing a high degree of similarity. In center-based clustering, the goal is to further assign a representative point, called center, to each cluster. The famous $k$-median problem, where the number of clusters is denoted by $k$, seeks to minimize the sum of distances from each input point to its assigned center.

In the era of big data, clustering often involves huge datasets,
where direct processing is computationally prohibitive. 
This challenge has given rise to the sketch-and-solve paradigm,
which employs a summarization step prior to the desired computational task,
i.e., the data is first preprocessed into a compact summary,
ideally of size that is independent of the original dataset size,
and then the desired algorithms for learning or optimization 
are applied only to this smaller summary.
This approach significantly reduces computational resources,
such as running time, memory, and communication,
but requires balancing between the summary's size and its information loss.

A coreset represents a concrete formalization of data summarization,
and is typically defined as a small subset of the input that provably captures
the relevant geometric properties for a specific objective function.
Over the past two decades, coresets have been
extensively studied and successfully applied to a wide range of problems
(see the surveys~\cite{DBLP:journals/widm/Feldman20,Phillips17} and references therein),
and two fundamental types of coresets have dominated the literature.
A \emph{weak coreset}, which aligns with the sketch-and-solve paradigm,
ensures that an optimal or near-optimal solution computed for the coreset
is near-optimal also for the original dataset,
without necessarily preserving the objective value itself.
In contrast, a \emph{strong coreset} provides a more comprehensive guarantee
by preserving the objective value, up to small error,
for all potential solutions, i.e., all centers in the metric space.
This notion is indeed stronger and has broader range of applications.
In particular, it exhibits a powerful property:
if a metric space $\mathcal{X}$ admits a strong coreset
(meaning that every instance in $\mathcal{X}$ admits a small coreset),
then every submetric $\mathcal{X'} \subseteq \mathcal{X}$ also admits a strong coreset. 
It follows that coreset results for one metric space $\mathcal{X}$
extend to every metric that can be embedded isometrically in $\mathcal{X}$.
A concrete example here is the Kendall-tau metric on rankings, which embeds into $\ell_1$.

This property regarding submetrics is very useful also when the optimal solution is constrained to meet specific criteria.
One such example is the problem of fair rank consensus,
which asks to aggregate multiple rankings into a single ranking,
viewed as a center point,
that satisfies certain fairness constraints among the candidates being ranked
\cite{DBLP:conf/nips/ChakrabortyD0S22,patro2022fair,pitoura2022fairness}.
Another example emerges in biological research,
where molecular data needs to be aggregated while satisfying structural properties that are critical for maintaining molecular stability
\cite{Tian2021,Zeng00P0023}.
Furthermore, constraints can be modified over time,
e.g., to reflect knowledge or demands,
effectively restricting the center points each time to a different submetric.

However, the advantages of strong coresets come with nontrivial computational challenges, 
as their construction algorithms must inevitably read the entire dataset.%
\footnote{Consider $k=1$ and a dataset where all points are densely clustered, 
except for one "outlier" point extremely far away. 
While this outlier hardly affects the optimal center, 
its impact on the objective value might be significant, 
and thus it must be examined and even included in every strong coreset.
}  
Can we construct coresets in \emph{sublinear time}? 
Weak coresets demonstrate that this is possible, 
as they can sometimes be constructed through uniform sampling \cite{DBLP:conf/icml/HuangJL23,HuangJL023, DBLP:conf/nips/MaromF19}, 
a highly efficient method that is easy to implement in streaming and distributed settings,
and is well-known to be extremely useful in practice. 
However, 
uniform sampling cannot reliably capture small clusters, 
which motivates us to study the fundamental case $k=1$, 
where uniform sampling can succeed without additional assumptions about the dataset. This case arises naturally in many practical scenarios and serves as an important building block for the general problem.

We formulate \emph{stable coresets}, a notion that 
captures key properties of strong coresets while avoiding many of their computational pitfalls. 
More precisely, they are intermediate between strong and weak coresets, 
and effectively provide a sweet spot between these two standard notions.
We investigate stable coresets within the framework of the $1$-median problem under the $\ell_1$ metric.
This metric emerges naturally in data analysis, particularly in high-dimensional settings, 
and also serves as a unifying framework for understanding numerous distance metrics. 
Indeed, many important metrics--including Hamming, Kendall-tau, Jaccard, tree metrics, various graph-based distances, as well as $\ell_2$--can be embedded into $\Rd$ with the $\ell_1$ metric either isometrically or with low distortion. 
Yet another application is in computational biology, where the genome median problem seeks to find a consensus genome that minimizes evolutionary distance to a set of input genomes, often using metrics that embed in $\ell_1$ space. 
The upshot is that results for stable coresets in $\ell_1$ 
immediately imply new coreset constructions also for these embedded metrics,
in contrast to weak coresets, which would require a separate analysis for each individual metric.

\subsection{Problem setup and definitions}
\label{subsection:setup}

In $k$-median,
the input is a finite set of points $P$ in the metric space $\metricspace$
and the goal is to find a set of $k$ centers $C \subseteq \Xspace$ that minimizes the objective function
\[
  \cost (C,P) := \sum_{p \in P} \min_{c \in C} \dist(c,p).
\]
We focus on the case $k=1$.
For this single-center case, we denote an arbitrary optimal center
(which minimizes the objective) by $c^P \in \Xspace$,
and the optimal value by $\opt (P) := \cost (c^P,P)$. 

We proceed to define formally the three coreset variants discussed above.%
\footnote{Although coresets are traditionally defined as weighted subsets, 
we present our definitions without weights for sake of clarity, 
as our focus is on coresets obtained through uniform sampling.
}
In a weak coreset, the main idea is that solving the coreset instance optimally,
or even approximately,
yields an approximately optimal solution also for the original instance.
Our definition below uses parameters $\epsilon$ and $\eta$
to create a tunable tradeoff in the approximation guarantee,
although most literature restricts attention to a single parameter
by setting $\eta = O(\epsilon)$ or alternatively $\epsilon = 0$.

\begin{definition}[Weak Coreset]
\label{definition:weak-coreset}
A subset $\coreset \subseteq P$ is 
a \emph{weak $(\epsilon,\eta)$-coreset} for a $1$-median instance $P\subseteq \Xspace$ if
\begin{equation}\label{eq:weak2}
\forall c \in \Xspace,
  \qquad
  \cost(c, Q) \leq (1+\epsilon)\opt(Q) \ \rightarrow\  \cost(c, P) \leq (1+\eta) \opt(P).
\end{equation}
\end{definition}

A strong coreset provides a more comprehensive guarantee by ensuring that the objective value is preserved for every possible center point in the metric space.

\begin{definition}[Strong Coreset]
A subset $\coreset \subseteq P$ is 
a \emph{strong $\epsilon$-coreset} for a $1$-median instance $P\subseteq \Xspace$ if
\begin{equation} \label{eq:strong}
  \forall c \in \Xspace,
  \qquad
  \cost(c, \coreset) \in  (1 \pm \epsilon) \cost(c, P).
\end{equation}
\end{definition}

A stable coreset imposes geometric constraints
on all points in the metric space, similarly to a strong coreset,
but with a comparative structure like that of a weak coreset. 

\begin{definition}[Stable Coreset]
A subset $Q \subseteq P$ is 
a \emph{stable $(\epsilon,\eta)$-coreset} for a $1$-median instance $P\subseteq \Xspace$ if 
\begin{equation} \label{eq:stable}
  \forall c_1,c_2 \in \Xspace,
  \qquad
  \cost(c_1, Q)  \leq (1+\epsilon)\cost(c_2, Q)
  \ \rightarrow \
  \cost(c_1, P)  \leq (1+\eta) \cost(c_2, P).
\end{equation}
\end{definition}

A key difference between strong and stable coresets is that 
the former preserve the actual cost of every center, 
while the latter preserve only the relative order of the costs across different centers. 
For illustration, consider a dataset whose points are clustered together except for one distant "outlier". 
A strong coreset must include this outlier to preserve its large contributions to the cost, 
while a stable coreset need not. 
This weaker requirement is crucial for uniform sampling to work effectively, 
and reveals a natural compatibility between stable coresets and uniform sampling.

While not formalized as a coreset notion, 
the principle underlying~\eqref{eq:stable} and its compatibility with uniform sampling 
were first used by~\cite{Indyk99, indyk2001high}, 
to compare the costs of two centers in the context of $1$-median 
with discrete centers (i.e., the center must be one of the dataset points). 
For finite $\Xspace$, Indyk's analysis would yield a stable $(0,\epsilon)$-coreset of size $O(\epsilon^{-2}\log |\Xspace|)$ (details in Appendix~\ref{appendix:finitespaces}).

We establish some basic properties of these definitions in \Cref{section:preliminaries}. 
In particular, there is a strict hierarchy:
every strong coreset is also a stable coreset,
and every stable coreset is also a weak coreset,
however the opposite direction is not true in general. 
In addition, the guarantees of a stable coreset transfer to every submetric,
and thus also to any isometrically embedded metric,
which is valuable for analyzing discrete metric spaces that embed into $\ell_1$.

\subsection{Our contribution}

While uniform sampling offers extensive practical advantages,
it is often viewed as a heuristic method for constructing coresets,
due to limited theoretical foundations. 
We focus on the case $k=1$, 
as extending to $k>1$ requires additional structural and algorithmic assumptions that we avoided for theoretical clarity.
Our main theorem shows that uniform sampling yields stable coresets in a rather broad setting,
namely, in $\ell_1$ and thus also in every metric that embeds into $\ell_1$.
Our proof has two parts: we first develop a framework for constructing stable coresets 
that is applicable to all metric spaces (in \Cref{section:framework}), 
and then we instantiate this framework with $\ell_1$-specific analysis (in \Cref{section:stable}).

\begin{restatable}{theorem}{stablethm}\label{thm:main}
Let $P\subset\Rd$ be finite and let $\epsilon \in (0,\frac{1}{5})$.
Then, a uniform sample of size $O(\epsilon^{-2} \log d)$ from $P$ 
is a stable $(\epsilon/6, 4\epsilon)$-coreset for $1$-median in $\ell_1^d$
with probability at least $4/5$.
\end{restatable}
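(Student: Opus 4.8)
The plan is to exploit the coordinate-wise decomposition of the $\ell_1$ metric, $\cost(c,P)=\sum_{j=1}^d\sum_{p\in P}\abs{c_j-p_j}$, and to compare every center against a single anchor $\median:=c^P$, the coordinate-wise median of $P$, which is an optimal $\ell_1$ center so that $\cost(\median,P)=\opt(P)$. Writing $r:=\abs{P}/\abs{Q}$, the heart of the argument is the \emph{anchored difference bound}: with high probability, for every $c\in\Rd$,
\begin{equation}\label{eq:plan-anchor}
  \abs{\,r\big(\cost(c,Q)-\cost(\median,Q)\big)-\big(\cost(c,P)-\opt(P)\big)\,}\le \lambda\,\abs{P}\,\normlone{c-\median},
\end{equation}
for a scale $\lambda=\Theta(\epsilon)$. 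Since the triangle inequality gives $\abs{P}\,\normlone{c-\median}\le\cost(c,P)+\opt(P)$, the right-hand side is at most $\lambda(\cost(c,P)+\opt(P))$, which is the form I will actually use.

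To prove~\eqref{eq:plan-anchor} I would reduce the $d$-dimensional uniform convergence to $d$ one-dimensional problems, and this is the step that produces the $\log d$ factor. For a fixed coordinate $j$ with median $\median_j$, consider the normalized ramp functions $\psi_s(u):=(\abs{s-u}-\abs{\median_j-u})/\abs{s-\median_j}$, indexed by $s\in\RR$; each maps into $[-1,1]$ and the family is a one-parameter family of piecewise-linear functions with $O(1)$ pieces, hence of constant pseudo-dimension. Standard uniform convergence (an $\epsilon$-sample bound for this range space, or a Rademacher estimate) then shows that a sample of size $O(\epsilon^{-2}\log(d/\delta))$ makes the empirical average of every $\psi_s$ within $\lambda$ of its true average, simultaneously for all $s$, with probability at least $1-\delta/d$; rescaling by $\abs{P}\,\abs{s-\median_j}$ turns this into the coordinate-$j$ version of~\eqref{eq:plan-anchor}. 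A union bound over the $d$ coordinates (each with failure probability $\delta/d$, using the \emph{same} sample) and summation over $j$, together with $\sum_j\abs{c_j-\median_j}=\normlone{c-\median}$, yields~\eqref{eq:plan-anchor} for all $c$ at once. The crucial point is that each coordinate is a genuinely one-dimensional problem of constant complexity, so the union bound over coordinates costs only an additive $\log d$ inside the sample size, rather than a multiplicative factor of $d$.

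It remains to pin down the single scalar $\tau:=r\,\cost(\median,Q)$, which appears identically for every center. Here single-center estimates need not concentrate---an outlier hit by the sample is reweighted by $r$---but I only need a crude upper bound: since $\EX[\tau]=\opt(P)$, Markov's inequality gives $\tau\le C\,\opt(P)$ for a constant $C$ with probability $1-1/C$. Now I would close the argument algebraically: substituting~\eqref{eq:plan-anchor} (in its $\lambda(\cost(\cdot,P)+\opt(P))$ form) for both $c_1$ and $c_2$ into the sample hypothesis $\cost(c_1,Q)\le(1+\epsilon/6)\cost(c_2,Q)$, the shared term $\tau$ cancels up to a factor $\epsilon/6$, leaving
\[
  \cost(c_1,P)\le(1+\tfrac{\epsilon}{6})\cost(c_2,P)+\tfrac{\epsilon}{6}(C-1)\opt(P)+O(\lambda)\big(\cost(c_1,P)+\cost(c_2,P)\big),
\]
and bounding $\opt(P)\le\cost(c_2,P)$ and moving the $O(\lambda)\cost(c_1,P)$ term to the left gives $\cost(c_1,P)\le(1+4\epsilon)\cost(c_2,P)$ once $\lambda=\Theta(\epsilon)$, $C=O(1)$ and $\delta$ are chosen small enough (with $\epsilon\le\tfrac15$) so that the accumulated constants fit the $4\epsilon$ budget and the total failure probability $\delta+1/C$ is at most $1/5$.

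I expect the main obstacle to be the uniform convergence in~\eqref{eq:plan-anchor}: naively one asks for control of $\cost(c,Q)$ over the infinite set of centers $c\in\Rd$, which is exactly the requirement that fails for strong coresets under uniform sampling. The two ideas that make it tractable are (i) comparing only \emph{differences} against the fixed optimal anchor $\median$, so that the per-point fluctuations are bounded by $\normlone{c-\median}$ and the additive error is naturally measured against $\cost(c,P)+\opt(P)$; and (ii) the $\ell_1$ decomposition, which turns the $d$-dimensional range space into $d$ independent one-dimensional ones and is responsible for the $\log d$ rather than $d$ in the sample size.
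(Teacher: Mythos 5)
Your proposal is correct, and its overall skeleton coincides with the paper's: your ``anchored difference bound,'' after dividing by $|P|$ and using $\normlone{c-\median}\le \avgcost(c,P)+\avgopt(P)\le 2\,\avgcost(c,P)$, is precisely the paper's relative cost-difference approximation condition~\eqref{eq:rcda}, and your closing step (Markov for the shared anchor term $\tau$, cancellation up to a factor $\epsilon/6$, absorbing $(\epsilon/6)(C-1)\opt(P)$ into $\cost(c_2,P)$) reproduces the proof of \Cref{thm:framework} with essentially the same constants ($C=6$ from Markov, union bound to total failure $1/5$). Where you genuinely diverge is in how the uniform bound over all centers is proved. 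The paper establishes an $\epsilon$-approximation guarantee for the $\{0,1\}$-valued class of axis-aligned thresholds $\tau_{i,r}$, whose VC dimension it shows to be $\Theta(\log d)$ (\Cref{prop:vcdim}, \Cref{cor:approx}), and then needs a dedicated combinatorial lemma (\Cref{lemma:mainlemma}) — partitioning each coordinate's middle region into blocks of $2\epsilon|P|$ points and telescoping — to convert range-counting accuracy into cost-difference accuracy, losing a factor of $20$. You instead run real-valued uniform convergence directly on the normalized ramp functions $\psi_s$, one coordinate at a time, and obtain the $\log d$ from a union bound over the $d$ coordinates rather than from the VC dimension of the joint threshold class. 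This bypasses the paper's main technical lemma entirely; both routes yield $O(\epsilon^{-2}\log d)$, and yours is arguably shorter, modulo the caveat below.

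The one step you should not leave as stated is the complexity claim for the ramp family: ``a one-parameter family of piecewise-linear functions with $O(1)$ pieces, hence of constant pseudo-dimension'' is not a valid general principle (one-parameter families, e.g.\ sinusoidal thresholds, can shatter arbitrarily large sets). The claim is true here, but for a structural reason you need to spell out: for $s<\median_j$ each subgraph set $\{(u,t):\psi_s(u)\ge t\}$ has the form $\{u\ge h(s,t)\}$ with $h(s,t)$ monotone in $s$, so these sets are linearly ordered by inclusion and that subfamily has VC dimension $1$; symmetrically for $s>\median_j$, and the union of the two chains still has constant VC dimension, which gives constant pseudo-dimension. Alternatively, you can avoid pseudo-dimension altogether by observing that, for $s<\median_j$ and every $u$,
\begin{equation*}
  |s-u|-|\median_j-u| \;=\; \int_{s}^{\median_j}\bigl(2\,\mathbbm{1}_{\{u\ge r\}}-1\bigr)\,dr ,
\end{equation*}
so averaging over $u$ from $P$ and from $Q$ and subtracting shows that the paper's threshold $\epsilon$-approximation (\Cref{cor:approx}) integrates directly into your coordinate-wise anchored bound with $\lambda=2\epsilon$; this reuses the paper's sampling machinery while keeping your cleaner conversion (and, incidentally, improves on the constant $20$ in \Cref{lemma:mainlemma}).
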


Prior work on coresets constructed through uniform sampling works in restricted settings. 
A weak $(0,\epsilon)$-coreset in $\ell_1$ is known from~\cite{danos21},
however using it would require solving the problem optimally on the coreset. 
A weak $(\epsilon,O(\epsilon))$-coreset in $\ell_2$ is known from~\cite{DBLP:conf/icml/HuangJL23}, 
however it offers much less generality than $\ell_1$ 
(recall $\ell_2$ embeds in $\ell_1$ with small distortion but not in the opposite direction). 
There are also weak coresets in doubling metrics~\cite{DBLP:journals/talg/AckermannBS10, DBLP:journals/ki/MunteanuS18,  DBLP:conf/icml/HuangJL23},
which include $\ell_1$ and $\ell_2$ spaces, 
however they are useful only when $\Xspace$ is low-dimensional.  
Most importantly, these are all weak coresets and need not extend to submetrics. 
A more comprehensive list of previous results appears in \Cref{section:related_work}.

Additionally, our approach bridges an important gap --
stable coresets provide almost as powerful guarantees as strong coresets,
while maintaining the simplicity and efficiency of uniform sampling.
It therefore establishes rigorously the broad range of applicability of uniform sampling.
We further conjecture that our bound can be improved to be dimension-independent,
and our empirical evidence supports this direction. 

By utilizing Theorem~\ref{thm:main} we can establish additional significant results
(\Cref{section:applications}). 
First, we explore implications to metric spaces that embed into $\ell_1$,
either isometrically or with small distortion, 
obtaining the first coresets based on uniform sampling for important metric spaces,
including Kendall-tau and Jaccard, as well as new bounds for $\ell_2$.
Second, building upon our coreset constructions for $1$-median, we derive approximation algorithms for the more general problem of $k$-median,
across all the aforementioned metric spaces.
Furthermore, we apply our framework to show that in certain scenarios,
uniform sampling actually produces strong coresets,
which in turn can speed up $O(1)$-approximation algorithms.

Finally, we validate experimentally the performance
of our approach in different settings and for various datasets
(\Cref{section:experiments}).
For instance, we show that a uniform sample achieves error rates
that are comparable to computationally expensive importance sampling techniques.
We also show that when applied to $1$-median in the Kendall-tau metric,
our coresets effectively preserve the performance of practical heuristic algorithms
(which are employed because this optimization problem is NP-hard).
We further validate that our coresets for Kendall-tau are effective,
i.e., maintain the solution quality,
even when constraints such as fairness requirements are imposed on the solution.

\subsection{Technical overview}

We now outline the proof of our main theorem, which consists of two parts,
a general framework for arbitrary metric spaces (\Cref{section:framework})
and its concrete application to the $\ell_1$ metric (\Cref{section:stable}). 

Our framework establishes a key condition
for a subset $Q\subset P$ to be a stable coreset for $P$,
called \emph{relative cost-difference approximation}.
It asserts that for every potential center in the metric space,
the difference between its cost and the median's cost
remains approximately the same when measured relative to the input $P$
or to subset $Q$, see~\eqref{eq:rcda}.
This condition is not sufficient by itself and we need another condition,
which is rather simple and holds with constant probability for a uniform sample.

To prove that a uniform sample in $\ell_1$ satisfies this condition,
we leverage $\epsilon$-approximations, a technique from PAC learning
that is tightly connected to the range-counting problem in computational geometry.
Li, Long, and Srinivasan~\cite{DBLP:journals/jcss/LiLS01} provided
tight bounds for this problem, which we apply in our analysis. 
While $\epsilon$-approximations support range-counting queries
(i.e., ensures that the proportion of points in any range is preserved),
we show through careful analysis that in $\ell_1$ metrics,
$\epsilon$-approximations for axis-aligned half spaces directly translate to
preserving the relative cost structure across the entire metric space.
We further establish that the query family of axis-aligned half spaces
has VC dimension that is logarithmic in the dimension of the underlying space.

\subsection{Related work}\label{section:related_work}

In $\ell_p$ metric spaces, the $k$-median problem for general $k$ 
is APX-hard~\cite{DBLP:conf/soda/GuruswamiI03,DBLP:journals/siamcomp/Trevisan00},
with some recent advances about its inapproximability~\cite{DBLP:conf/soda/Cohen-AddadSL22}. 
In a metric space of bounded doubling dimension $D$,
this problem admits a polynomial-time approximation scheme (PTAS),
namely, $(1+\epsilon)$-approximation that runs in time $\tilde{O}(2^{(\frac{1}{\epsilon})^{O(D^2)}}n)$ \cite{DBLP:journals/jacm/Cohen-AddadFS21}.
Clearly this approach is only practical when the doubling dimension is very small.

Coresets for $k$-median have been researched extensively over the years,
with particular emphasis on strong coresets in Euclidean space,
see~\cite{DBLP:journals/widm/Feldman20, DBLP:journals/ki/MunteanuS18} for surveys
and~\cite{DBLP:conf/soda/Cohen-AddadD0SS25, DBLP:conf/soda/Huang0L025,DBLP:conf/stoc/Huang0024} for the latest results.
Uniform-sampling-based coreset constructions originated in~\cite{DBLP:journals/siamcomp/Chen09}, 
which proposed partitioning the metric space into ``rings'' and sampling uniformly from each part.
This approach was further improved in~\cite{DBLP:conf/focs/BravermanCJKST022,DBLP:conf/nips/Cohen-AddadSS21},
and while it yields strong coresets,
the overall sampling distribution is non-uniform, 
and thus the running time is not sublinear.

To enable truly uniform sampling,
we must restrict our attention to weaker coresets and the case $k = 1$.
Uniform sampling was shown to yield weak $(0,\epsilon)$-coresets for $1$-median in Euclidean space 
in~\cite{DBLP:journals/talg/AckermannBS10, DBLP:journals/ki/MunteanuS18, danos21},
and these bounds were improved by~\cite{DBLP:conf/icml/HuangJL23} 
to weak $(\epsilon, O(\epsilon))$-coresets of size $\tilde{O}(\frac{1}{\epsilon^3})$, 
alongside additional results for spaces of bounded doubling dimension and graphs of bounded treewidth
(and also an extension to general $k$ under additional assumptions about the dataset).
For $1$-median in $\ell_1$, a uniform sample of size $\tilde{O}(\frac{1}{\epsilon^2})$ produces a weak $(0,\epsilon)$-coreset~\cite{danos21}. 

In comparison, strong coresets for $k$-median in $\ell_1$ of size $\poly(k/\epsilon)$
follow implicitly by~\cite{DBLP:journals/ml/JiangKLZ24}, 
because $\ell_1$ is contained in $\ell_2$-squared, 
however, constructing such coresets requires at least linear time.

Coresets for $1$-center (aka Minimum Enclosing Ball) in $\ell_1$ and in related metrics
were studied by~\cite{DBLP:conf/innovations/CarmelGJK25}.
It was shown that for both strong and weak coresets,
the coreset size must depend on the dimension (in contrast to $1$-median).

\section{Preliminaries}\label{section:preliminaries}

We begin by showing that every strong coreset is also a stable coreset, and every stable coreset is a weak coreset, forming a clear hierarchy among these definitions.

\begin{restatable}{proposition}{hierarchy}\label{prop:hierarchy}
Let $(\Xspace,\dist)$ be a metric space
and let $P\subseteq \Xspace$ be a $1$-median instance.
\begin{enumerate} \compactify
\renewcommand{\theenumi}{(\alph{enumi})}
\item \label{it:a}
  Every stable $(\epsilon, \eta)$-coreset of $P$
  is also a weak $(\epsilon, \eta)$-coreset.
\item \label{it:b}
  Every strong $\epsilon$-coreset of $P$,
  for $\epsilon \in (0, \frac{1}{5})$, 
  is also a stable $(\epsilon,4\epsilon)$-coreset. 
\end{enumerate}
\end{restatable}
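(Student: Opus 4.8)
The plan is to prove each inclusion directly from the definitions, by instantiating the universally quantified centers in the appropriate way. Part~\ref{it:a} is a one-line deduction from the stable property, and part~\ref{it:b} is a chaining of the two-sided strong bound followed by one elementary scalar inequality.

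For part~\ref{it:a}, I would fix an arbitrary $c\in\Xspace$ satisfying the weak-coreset hypothesis $\cost(c,Q)\le(1+\epsilon)\opt(Q)$ and aim to invoke the stable implication~\eqref{eq:stable}. The tempting choice $c_2=c^Q$ fails: it only yields $\cost(c,P)\le(1+\eta)\cost(c^Q,P)$, and $\cost(c^Q,P)$ may strictly exceed $\opt(P)$, so the conclusion does not close. The fix is to instead take $c_1=c$ and $c_2=c^P$, the optimal center of the original instance. Since $\opt(Q)\le\cost(c^P,Q)$ by optimality of $\opt(Q)$ over all centers, the hypothesis upgrades to $\cost(c,Q)\le(1+\epsilon)\cost(c^P,Q)$, and the stable property then gives $\cost(c,P)\le(1+\eta)\cost(c^P,P)=(1+\eta)\opt(P)$, which is exactly the weak-coreset conclusion.

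For part~\ref{it:b}, I would start from the two-sided strong guarantee $(1-\epsilon)\cost(c,P)\le\cost(c,Q)\le(1+\epsilon)\cost(c,P)$, valid for every $c$. Fixing $c_1,c_2$ with $\cost(c_1,Q)\le(1+\epsilon)\cost(c_2,Q)$, I would chain the lower bound for $c_1$ with the upper bound for $c_2$ to obtain
\[
\cost(c_1,P)\le\frac{1}{1-\epsilon}\cost(c_1,Q)\le\frac{1+\epsilon}{1-\epsilon}\cost(c_2,Q)\le\frac{(1+\epsilon)^2}{1-\epsilon}\cost(c_2,P).
\]
It then remains only to verify $\frac{(1+\epsilon)^2}{1-\epsilon}\le 1+4\epsilon$. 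Since $1-\epsilon>0$, this is equivalent to $(1+\epsilon)^2\le(1+4\epsilon)(1-\epsilon)$, which after expanding reduces to $5\epsilon^2\le\epsilon$, i.e. to $\epsilon\le\tfrac15$ — precisely the stated range.

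The only genuine subtlety is the choice of $c_2$ in part~\ref{it:a}: anchoring the comparison at the coreset's own optimum $c^Q$ is the natural first attempt but leaves an uncontrolled gap between $\cost(c^Q,P)$ and $\opt(P)$, so the argument must instead be anchored at the input optimum $c^P$. Part~\ref{it:b} is otherwise purely the routine scalar inequality above, whose tightness at $\epsilon=\tfrac15$ explains the hypothesis imposed on $\epsilon$.
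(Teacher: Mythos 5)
Your proposal is correct and follows essentially the same route as the paper's proof: part~\ref{it:a} anchors the stable implication at the input optimum $c^P$ via $\opt(Q)\le\cost(c^P,Q)$, and part~\ref{it:b} chains the two-sided strong bounds to get the factor $\frac{(1+\epsilon)^2}{1-\epsilon}\le 1+4\epsilon$ for $\epsilon\le\frac15$. The only difference is cosmetic: you verify the final scalar inequality explicitly, which the paper leaves implicit.
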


We next describe how stable coreset properties are preserved
for submetrics through isometric embeddings.
An \emph{isometric embedding} between
metric spaces $(\Xspace_1,\dist_1)$ and $(\Xspace_2,\dist_2)$ 
is a mapping $f:\Xspace_1 \rightarrow \Xspace_2$ such that
for every $x,y \in \Xspace_1$, we have $\dist_1(x,y) = \dist_2 (f(x),f(y))$.
The following fact is immediate.

\begin{restatable}{fact}{isometricfact}\label{fact:isometric}
Let $f:\Xspace_1 \rightarrow \Xspace_2$ be an isometric embedding
between metric spaces $(\Xspace_1,\dist_1)$ and $(\Xspace_2,\dist_2)$.
Then, 
\begin{enumerate} \compactify
\renewcommand{\theenumi}{(\alph{enumi})}
\item \label{it:aa}
  $f$ is injective; and 
\item \label{it:bb}
  for every $P \subseteq \Xspace_1$ and $c \in \Xspace_1$,
  $\cost(c,P)=\cost(f(c),f(P))$.
\end{enumerate}
\end{restatable}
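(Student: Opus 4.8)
The plan is to treat the two parts in order, deriving part~\ref{it:bb} from part~\ref{it:aa}; both follow immediately from the defining identity $\dist_1(x,y)=\dist_2(f(x),f(y))$ together with the metric axioms, so I expect no serious obstacle here—this is a bookkeeping fact whose only subtle point is flagged below.

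For injectivity, I would take any $x,y\in\Xspace_1$ with $f(x)=f(y)$ and note that then $\dist_2(f(x),f(y))=0$, so the isometry identity forces $\dist_1(x,y)=0$. Since $\dist_1$ is a metric, the identity-of-indiscernibles axiom gives $x=y$, establishing that $f$ is injective.

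For the cost equality, I would expand both sides using $\cost(c,P)=\sum_{p\in P}\dist(c,p)$. The right-hand side reads $\cost(f(c),f(P))=\sum_{q\in f(P)}\dist_2(f(c),q)$, and the crux is to re-index this sum over $P$ instead of over $f(P)$. Because $f$ is injective by part~\ref{it:aa}, the map $p\mapsto f(p)$ is a bijection of $P$ onto $f(P)$, so the sum over $f(P)$ equals $\sum_{p\in P}\dist_2(f(c),f(p))$. Applying the isometry identity term by term, $\dist_2(f(c),f(p))=\dist_1(c,p)$, and the sum collapses to $\cost(c,P)$.

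The single point I would be careful about is exactly this re-indexing: were $f$ not injective, the set $f(P)$ could have strictly fewer elements than $P$ and the summation over $f(P)$ would undercount, so the cost identity genuinely depends on part~\ref{it:aa}. With injectivity in hand, the remainder is purely mechanical.
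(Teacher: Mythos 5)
Your proof is correct and follows essentially the same route as the paper's: injectivity from the identity-of-indiscernibles axiom, and the cost identity by rewriting the sum via the isometry relation. The only difference is that you make explicit the re-indexing of the sum over $f(P)$ as a sum over $P$ (and its reliance on part~(a)), a step the paper's one-line chain of equalities leaves implicit.
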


Observe that when $f$ is injective, 
every subset of $f(P)$ can be written as $f(Q)$ for some $Q \subseteq P$.

\begin{restatable}{proposition}{isometricprop}\label{prop:isometric}
Let $f:\Xspace_1 \rightarrow \Xspace_2$ be an isometric embedding
between metric spaces $(\Xspace_1,\dist_1)$ and $(\Xspace_2,\dist_2)$.
For every $Q\subseteq P \subseteq \Xspace_1$, 
if $f(Q)$ is a stable $(\epsilon, \eta)$-coreset of $f(P)$ in $\Xspace_2$,
then $Q$ is a stable $(\epsilon, \eta)$-coreset of $P$ in $\Xspace_1$.
\end{restatable}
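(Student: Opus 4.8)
The plan is to transfer the stable-coreset condition back and forth across the embedding $f$, exploiting that $f$ preserves both distances and costs. Concretely, I would fix arbitrary centers $c_1,c_2\in\Xspace_1$, assume the hypothesis of \eqref{eq:stable} for $Q$, namely $\cost(c_1,Q)\leq(1+\epsilon)\cost(c_2,Q)$, and aim to derive the conclusion $\cost(c_1,P)\leq(1+\eta)\cost(c_2,P)$.

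The first step is to rewrite the four relevant costs using cost preservation. By \Cref{fact:isometric}\ref{it:bb} (in the form valid for arbitrary centers; see below) we have $\cost(c_i,Q)=\cost(f(c_i),f(Q))$ and $\cost(c_i,P)=\cost(f(c_i),f(P))$ for $i\in\{1,2\}$. The assumed inequality thus becomes $\cost(f(c_1),f(Q))\leq(1+\epsilon)\cost(f(c_2),f(Q))$, an inequality between costs in $\Xspace_2$ evaluated at the image centers $f(c_1),f(c_2)$.

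Next I would invoke the hypothesis that $f(Q)$ is a stable $(\epsilon,\eta)$-coreset of $f(P)$ in $\Xspace_2$. Its defining implication \eqref{eq:stable} is quantified over all centers in $\Xspace_2$; instantiating it at $f(c_1)$ and $f(c_2)$ yields $\cost(f(c_1),f(P))\leq(1+\eta)\cost(f(c_2),f(P))$. Translating back through the same cost-preservation identity gives $\cost(c_1,P)\leq(1+\eta)\cost(c_2,P)$, which is exactly what is required for $Q$ to be a stable $(\epsilon,\eta)$-coreset of $P$ in $\Xspace_1$. Since $c_1,c_2$ were arbitrary, this completes the argument.

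The only genuine subtlety, and the step I would treat most carefully, is directional: the stable-coreset guarantee for $f(Q)$ holds for \emph{all} centers in $\Xspace_2$, whereas we invoke it only at the image points $f(c_1),f(c_2)\in f(\Xspace_1)\subseteq\Xspace_2$. A universally quantified implication specializes freely to any choice of centers, so restricting to the image is harmless and loses no information. I would also note that although \Cref{fact:isometric}\ref{it:bb} is stated for $c\in P$, the chain $\cost(c,P)=\sum_{p\in P}\dist_1(c,p)=\sum_{p\in P}\dist_2(f(c),f(p))=\cost(f(c),f(P))$ uses only that $f$ is an isometry on all of $\Xspace_1$, and hence holds for every $c\in\Xspace_1$; this is the form the argument needs. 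Finally, injectivity of $f$ from \ref{it:aa} guarantees that $f(Q)\subseteq f(P)$ is a bona fide subset, so the coreset hypothesis for $f(Q)$ is well-posed.
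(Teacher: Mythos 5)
Your proof is correct and follows essentially the same route as the paper's: transfer the hypothesis inequality to $\Xspace_2$ via cost preservation, instantiate the stable-coreset guarantee of $f(Q)$ at the image centers $f(c_1),f(c_2)$, and translate back. Your observation that \Cref{fact:isometric}\ref{it:bb} must be read as holding for arbitrary $c\in\Xspace_1$ (not just $c\in P$) is a valid and careful point that the paper glosses over, but it does not change the argument.
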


\section{A framework for stable coresets}
\label{section:framework}
We develop a general framework for proving that subsets $Q \subseteq P$ in arbitrary metric spaces $\Xspace$ are stable coresets, with detailed proofs provided in \Cref{app:framework}.
We apply this framework to $\ell_1$ spaces in \Cref{section:stable},
and believe that it will lead to more results in the future.

We use the notation from \Cref{subsection:setup},
and define also the normalized terms 
$\avgcost(x,P) := \frac{1}{|P|} \cost (x,P)$
and $\avgopt (P) := \frac{1}{|P|}\opt (P)$. 
Denoting by $\median\in\Xspace$ an optimal median point for $P$,
we say that $Q$ is an \emph{$\epsilon$-relative cost-difference approximation ($\epsilon$-RCDA)} of $P$ in $\Xspace$ if
\begin{equation}\label{eq:rcda}
  \forall x \in \Xspace, 
  \qquad
  \Big| \big[\avgcost(x,P) - \avgcost(\median,P)\big]
    - \big[\avgcost(x,Q) - \avgcost(\median,Q)\big] \Big|
  \leq \epsilon\cdot \avgcost(x,P).
\end{equation}
Intuitively, this condition requires that $Q$ preserves the gap in cost 
between every potential center and a reference point,
ensuring that the ranking of centers remains approximately the same whether evaluated on the original set $P$ or the coreset $Q$.
We remark that it is not crucial for this definition
to have $\median$ be a median point,
and it can be substituted by any fixed point in the metric space,
up to constant-factor loss in $\epsilon$.
We now use this condition to establish that $Q$ is a stable coreset.

\begin{restatable}{theorem}{framework}\label{thm:framework}
Let $P \subseteq \Xspace$ and $\epsilon \in (0, \frac{1}{5})$,
and suppose $\avgcost (\median,Q) \leq c\cdot \avgcost(\median,P)$ for some $c \geq 1$.
If $Q \subseteq P$ is an $\epsilon$-RCDA of $P$ in $\Xspace$
then $Q$ is a $(\frac{\epsilon}{c},4\epsilon)$-stable coreset of $P$.
\end{restatable}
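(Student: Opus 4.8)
The plan is to reduce the entire statement to the \emph{normalized} costs $\avgcost(\cdot,P)$ and $\avgcost(\cdot,Q)$, exploiting the fact that the factors $|P|$ and $|Q|$ cancel inside the ratios that appear in both the hypothesis and the conclusion of the stable-coreset definition~\eqref{eq:stable}. Indeed, for any $c_1,c_2\in\Xspace$ the premise $\cost(c_1,Q)\le(1+\tfrac{\epsilon}{c})\cost(c_2,Q)$ is equivalent to $\avgcost(c_1,Q)\le(1+\tfrac{\epsilon}{c})\avgcost(c_2,Q)$, and likewise the target conclusion $\cost(c_1,P)\le(1+4\epsilon)\cost(c_2,P)$ is equivalent to the same statement written with $\avgcost(\cdot,P)$. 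So I will fix arbitrary $c_1,c_2$ satisfying the (normalized) premise and derive the (normalized) conclusion. Throughout I abbreviate $A_i:=\avgcost(c_i,P)$, $m_P:=\avgcost(\median,P)=\avgopt(P)$, and $m_Q:=\avgcost(\median,Q)$, and I use repeatedly that $\median$ is an optimal median for $P$, so that $m_P\le A_i$ for $i=1,2$.

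First I would rewrite the RCDA guarantee~\eqref{eq:rcda}: applied to $x=c_i$ it says $\avgcost(c_i,Q)=m_Q+(A_i-m_P)+\delta_i$ for some error term $\delta_i$ with $|\delta_i|\le\epsilon A_i$. Substituting both identities into the normalized premise and cancelling the common $m_Q$, I would lower-bound the left side using $\delta_1\ge-\epsilon A_1$ and upper-bound the right side using $\delta_2\le\epsilon A_2$, arriving at a clean inequality of the form $(1-\epsilon)A_1\le \tfrac{\epsilon}{c}(m_Q-m_P)+(1+\tfrac{\epsilon}{c})(1+\epsilon)A_2$.

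Next I would eliminate the cross term $\tfrac{\epsilon}{c}(m_Q-m_P)$ via the two stated bounds $m_Q\le c\,m_P$ and $m_P\le A_2$, which give $m_Q-m_P\le(c-1)A_2$ and hence $\tfrac{\epsilon}{c}(m_Q-m_P)\le\tfrac{\epsilon(c-1)}{c}A_2$ (when $m_Q<m_P$ this term is negative and only helps). The key computation — and the one step I expect to be the real obstacle, since it is where the dependence on the uncontrolled ratio $c$ must disappear — is that after collecting all coefficients of $A_2$ one finds the exact simplification $\tfrac{\epsilon(c-1)}{c}+(1+\tfrac{\epsilon}{c})(1+\epsilon)=1+2\epsilon+\tfrac{\epsilon^2}{c}$, which is at most $1+2\epsilon+\epsilon^2=(1+\epsilon)^2$ because $c\ge1$. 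This yields $(1-\epsilon)A_1\le(1+\epsilon)^2 A_2$.

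Finally I would divide by $1-\epsilon>0$ and check the elementary inequality $\tfrac{(1+\epsilon)^2}{1-\epsilon}\le 1+4\epsilon$; cross-multiplying reduces this to $5\epsilon^2\le\epsilon$, i.e.\ exactly to the hypothesis $\epsilon\le\tfrac15$. Hence $A_1\le(1+4\epsilon)A_2$, that is $\avgcost(c_1,P)\le(1+4\epsilon)\avgcost(c_2,P)$, and multiplying back by $|P|$ gives $\cost(c_1,P)\le(1+4\epsilon)\cost(c_2,P)$, which is precisely~\eqref{eq:stable} for the parameters $(\tfrac{\epsilon}{c},4\epsilon)$. The only care required beyond routine algebra is tracking the two signed bounds on $\delta_1,\delta_2$ in the correct directions and verifying the cancellation to $1+2\epsilon+\tfrac{\epsilon^2}{c}$, which is what makes the bound independent of $c$.
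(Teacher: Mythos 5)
Your proof is correct, and although it opens the same way as the paper's---apply the RCDA bound \eqref{eq:rcda} once to each center and combine with the coreset premise---it resolves the crux of the argument by a genuinely different maneuver. The paper works with the contrapositive: assuming $\avgcost(c_1,P)>(1+4\epsilon)\avgcost(c_2,P)$, it derives $\avgcost(c_1,Q)\ge \avgcost(c_2,Q)+2\epsilon(1-2\epsilon)\avgcost(c_2,P)$ (its equation~\eqref{eq:stableproof}) and must then convert the leftover term $\avgcost(c_2,P)$ into units of $\avgcost(c_2,Q)$; it does so by a second invocation of \eqref{eq:rcda} together with the inequality $\avgcost(c_2,Q)\ge\avgcost(\median,Q)$. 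That inequality is the weak point of the paper's write-up: $\median$ is a median of $P$, not of $Q$, so it is not true in general (it holds only approximately, via \eqref{eq:rcda}), yet the paper states it as a fact. Your direct route sidesteps this entirely: the only properties of the reference point you use are the hypothesis $\avgcost(\median,Q)\le c\,\avgcost(\median,P)$ and the optimality of $\median$ \emph{for $P$}, i.e.\ $\avgcost(\median,P)\le\avgcost(c_2,P)$, both unimpeachable, and they suffice to absorb the cross term $\tfrac{\epsilon}{c}\bigl(\avgcost(\median,Q)-\avgcost(\median,P)\bigr)\le\tfrac{\epsilon(c-1)}{c}\avgcost(c_2,P)$ into the coefficient of $\avgcost(c_2,P)$. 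The remaining algebra checks out: $\tfrac{\epsilon(c-1)}{c}+(1+\tfrac{\epsilon}{c})(1+\epsilon)=1+2\epsilon+\tfrac{\epsilon^2}{c}\le(1+\epsilon)^2$ for $c\ge1$, and $(1+\epsilon)^2/(1-\epsilon)\le 1+4\epsilon$ is exactly $\epsilon\le\tfrac15$. In short, the paper's version isolates an explicit cost gap between the two centers but pays for it with a fragile step, whereas your version keeps every error term attached to costs measured on $P$, reaches the same parameters $(\tfrac{\epsilon}{c},4\epsilon)$, and in effect repairs the one unjustified inequality in the paper's own argument.
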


\section{Stable coresets in \texorpdfstring{$\ell_1$}{l1} through uniform sampling}
\label{section:stable}

In this section we prove the following theorem
(some proofs appear in \Cref{app:stable}).
As usual, $\ell_1^d$ denotes the metric space $(\Rd, \normlone{\cdot})$.

\stablethm*

For a point $x \in \mathbb{R}^d$, we denote its $i$-th coordinate by $x[i]$. Let $\mathcal{T} := \{\tau_{i,r} : i \in [d], r\in \mathbb{R}\}$ denote the class of threshold functions with $\tau_{i,r}(x) := \mathbbm{1}_{\{x[i] \leq r\}}$ for $i \in [d]$ and $r \in \mathbb{R}$.

\begin{definition} [VC dimension~\cite{vapnik1971uniform}]
\label{definition:vcdimension}
Let $\mathcal{F}$ be a class of functions from $\mathcal{X}$ to $\{0,1\}$.
The \emph{growth function} of $\mathcal{F}$ is defined as 
\[
  \forall \text{ integer } m\ge1,
  \qquad 
  \mathbf{G}_{\mathcal{F}}(m) 
  := \max_{x_0,\ldots,x_{m-1} \in \mathcal{X}} |\{(f(x_0), \ldots, f(x_{m-1})) :\ f \in \mathcal{F}\}| ,
\]
and $\mathbf{G}_{\mathcal{F}}(0) := 1$.
The \emph{VC dimension} of $\mathcal{F}$, denoted by $\VCdim(\mathcal{F})$,
is the largest $m\ge0$ such that $\mathbf{G}_{\mathcal{F}}(m) = 2^m$.
Furthermore, a set $\{ x_0,\ldots, x_{m-1} \}$ such that $\abs{\{(f(x_0), \ldots, f(x_{m-1})) :\ f \in \mathcal{F}\}} = 2^m$ is called a \emph{shattering set}.
\end{definition}

We now bound the VC dimension of the class of threshold functions $\mathcal{T}$, showing that it is essentially logarithmic in the dimension $d$. While tight bounds appear in~\cite{gey2018vapnik}, we provide a shorter proof for completeness.

\begin{restatable}{proposition}{vcdim}\label{prop:vcdim}
    $\lfloor \log d \rfloor \leq \VCdim(\mathcal{T}) \leq  2\log d$.
\end{restatable}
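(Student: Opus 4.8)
The plan is to bound the VC dimension of the threshold class $\mathcal{T}$ from both sides, treating the upper and lower bounds separately since they use rather different ideas.

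For the \textbf{upper bound} $\VCdim(\mathcal{T}) \leq 2\log d$, I would proceed by bounding the growth function $\mathbf{G}_{\mathcal{T}}(m)$ and then invoking the definition of VC dimension. The key observation is that each threshold function $\tau_{i,r}$ is parametrized by a coordinate $i \in [d]$ and a threshold $r \in \mathbb{R}$. Fix any set $\{x_0, \ldots, x_{m-1}\}$ of $m$ points. For a fixed coordinate $i$, as $r$ ranges over $\mathbb{R}$, the sign pattern $(\tau_{i,r}(x_0), \ldots, \tau_{i,r}(x_{m-1}))$ is determined entirely by how many of the values $x_0[i], \ldots, x_{m-1}[i]$ fall at or below $r$; sweeping $r$ from $-\infty$ to $+\infty$ produces at most $m+1$ distinct patterns (one boundary crossing per point). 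Summing over the $d$ choices of coordinate gives $\mathbf{G}_{\mathcal{T}}(m) \leq d(m+1)$. To shatter $m$ points we need $2^m \leq d(m+1)$, which for the relevant range forces $m \leq 2\log d$; I would verify that $2^m > d(m+1)$ whenever $m > 2\log d$ through a short direct estimate (for $m = 2\log d$ one has $d(m+1) = d(2\log d + 1)$, comfortably below $2^{2\log d} = d^2$ for $d$ large enough, and the small cases are checked by hand).

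For the \textbf{lower bound} $\lfloor \log d \rfloor \leq \VCdim(\mathcal{T})$, I would explicitly construct a shattering set of size $m = \lfloor \log d \rfloor$. Since $2^m \leq d$, I can associate to each of the $2^m$ subsets $S \subseteq [m]$ a distinct coordinate index $i_S \in [d]$. The idea is to design $m$ points $x_0, \ldots, x_{m-1} \in \mathbb{R}^d$ so that for every target subset $S$, there is a single coordinate $i_S$ and a threshold $r$ such that $\tau_{i_S, r}$ realizes exactly the indicator pattern of $S$ on these $m$ points. Concretely, in coordinate $i_S$, I would place the values so that $x_j[i_S]$ is small (below $r$) precisely when $j \in S$ and large (above $r$) otherwise --- for instance, set $x_j[i_S] = 0$ for $j \in S$ and $x_j[i_S] = 1$ for $j \notin S$, and pick $r = 1/2$. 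Each of the $2^m$ subsets then gets witnessed by its own dedicated coordinate, so all $2^m$ dichotomies are achieved and the $m$ points are shattered.

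The main obstacle is the \textbf{upper bound counting argument}, specifically getting the constant right: the growth function bound $d(m+1)$ is loose enough that one must check it actually yields the claimed $2\log d$ (and not merely $\log d + O(\log\log d)$) across all regimes of $d$, including small $d$. I would handle this by being explicit about the inequality $2^m \leq d(m+1)$ and confirming it fails for $m > 2\log d$ via the estimate above, noting that the slack in the bound comfortably absorbs the lower-order term. The lower bound is essentially a clean explicit construction and should present no real difficulty once the coordinate-to-subset correspondence is set up.
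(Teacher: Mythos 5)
Your lower bound is correct and is essentially the paper's own construction: dedicating one coordinate to each subset of the $m=\lfloor \log d\rfloor$ points, placing $0/1$ values in that coordinate so that the threshold $\tfrac12$ carves out exactly that subset, is the same argument (the paper merely indexes coordinates by binary vectors rather than by subsets).

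Your upper bound takes a genuinely different route from the paper: you bound the growth function by $\mathbf{G}_{\mathcal{T}}(m)\le d(m+1)$ and compare it with $2^m$, whereas the paper restricts attention to the shattered patterns having exactly $\lfloor m/2\rfloor$ ones and argues that two distinct such patterns must come from distinct coordinates (patterns produced by a single coordinate are nested, and two distinct nested $0/1$ vectors cannot have equal weight), which yields $\binom{m}{\lfloor m/2\rfloor}\le d$ and hence $m\le 2\log d$. Your route can be made to work, but as stated it has a concrete gap at $d=2$: the verification you promise, namely that $2^m > d(m+1)$ whenever $m>2\log d$, is false for $d=2$, $m=3$, where $2^3=8=2\cdot(3+1)$ exactly. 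So your count cannot rule out shattering three points in two dimensions, i.e., it does not establish $\VCdim(\mathcal{T})\le 2\log 2=2$, and there is no slack to absorb this case --- equality holds, so no hand-check of the inequality can rescue it. The repair is simple: your per-coordinate count of $m+1$ patterns double-counts the all-zeros and all-ones patterns, which are common to every coordinate, so in fact $\mathbf{G}_{\mathcal{T}}(m)\le d(m-1)+2$. At $d=2$, $m=3$ this gives $6<8$, and one can check that $2^m>d(m-1)+2$ holds for every $d\ge 2$ and every integer $m>2\log d$ (verify the few cases $d\le 6$ directly; for larger $d$ compare with $d^2$ as you do, and use that $2^m-d(m-1)-2$ is increasing in $m$ once $2^m\ge d$). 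With that one refinement your proof is complete.
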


For a given $P$, define the empirical distribution function for the $i$-th coordinate by
$\ord_P(i,r) := \frac{1}{|P|}\sum_{p \in P} \tau_{i,r}(p) = \frac{1}{|P|}\big| \set{p \in P : p[i] \leq r} \big|$.
When $d=1$, we slightly abuse notation and omit the parameter $i$.
We use this notation to define $\epsilon$-approximation for $P$.

\begin{definition}\label{def:approximation}
Let $P\subset\Rd$ be finite and let $\epsilon \in (0,1)$.
A subset $Q \subseteq P$ is an \emph{$\epsilon$-approximation} for $P$ if
\[
  \forall i \in [d], \forall r \in \RR,
  \qquad
  \abs{\ord_Q(i,r)-\ord_P(i,r)} \leq \epsilon.
\]
\end{definition}

Using a theorem established by~\cite{DBLP:journals/jcss/LiLS01}, we can bound the size of such $\epsilon$-approximation (see also~\cite{DBLP:journals/dcg/Har-PeledS11}).

\begin{theorem} [\cite{DBLP:journals/jcss/LiLS01}] 
\label{theorem:approx}
Let $\mathcal{F}$ be a class of function from $P$ to $\set{0,1}$, with finite VC dimension, and let $\mathcal{D}$ be some probability distribution over $P$. 
Then, with probability at least $1-\delta$, a random sample $Q \sim D$ of size $O\left(\epsilon^{-2} (\VCdim(\mathcal{F}) + \log \frac{1}{\delta})  \right)$ 
satisfies
\[
  \forall f \in \mathcal{F},
  \qquad
  \Big| \frac{1}{|Q|}\sum_{x\in Q}f(x) - \EX_{x \sim D}[f(x)] \Big| 
  \leq 
  \epsilon.
\]
\end{theorem}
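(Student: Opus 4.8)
The plan is to prove this via the classical symmetrization-and-union-bound argument underlying VC uniform convergence, and then to indicate what extra care is needed to obtain the sharp sample size. Write $m := |Q|$ and, for a fixed $f \in \mathcal{F}$, let $D_f := \bigl| \frac{1}{m}\sum_{x \in Q} f(x) - \EX_{x \sim \mathcal{D}}[f(x)] \bigr|$ denote its deviation. For a \emph{single} $f$, a Chernoff/Hoeffding bound already gives $\Pr[D_f > \epsilon] \le 2\exp(-2m\epsilon^2)$, so the entire difficulty is that $\mathcal{F}$ may be infinite and cannot be union-bounded over directly.

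First I would apply the \emph{ghost-sample} (double-sampling) symmetrization: draw a second independent sample $Q'$ of size $m$ and argue that $\Pr[\sup_{f\in\mathcal{F}} D_f > \epsilon]$ is bounded, up to a factor of $2$, by $\Pr[\sup_{f} |\frac{1}{m}\sum_{Q} f - \frac{1}{m}\sum_{Q'} f| > \epsilon/2]$. This reduction hinges on each ghost sample concentrating, i.e.\ $\Pr[|\frac{1}{m}\sum_{Q'} f - \EX f| \le \epsilon/2] \ge 1/2$, which by Chebyshev holds once $m \gtrsim \epsilon^{-2}$; this is precisely where the $\epsilon^{-2}$ dependence enters.

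The payoff of symmetrization is that the surviving event depends only on the \emph{combined} multiset $Q \cup Q'$ of $2m$ points. I would condition on this set and invoke \Cref{definition:vcdimension} together with the Sauer--Shelah lemma: restricted to these $2m$ points, $\mathcal{F}$ realizes at most $\mathbf{G}_{\mathcal{F}}(2m) \le (2m)^{\VCdim(\mathcal{F})}$ distinct $0/1$ patterns, so only finitely many functions matter. Conditioned on $Q \cup Q'$, the split into $Q$ and $Q'$ is a uniformly random balanced partition, whence $\frac{1}{m}\sum_{Q} f - \frac{1}{m}\sum_{Q'} f$ is a signed (Rademacher-type) average to which a Hoeffding bound applies, yielding $2\exp(-c m \epsilon^2)$ per pattern. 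A union bound over the $\le (2m)^{\VCdim(\mathcal{F})}$ patterns together with the choice $m = O(\epsilon^{-2}(\VCdim(\mathcal{F}) + \log\tfrac{1}{\delta}))$ then pushes the total failure probability below $\delta$.

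The main obstacle is obtaining the \emph{clean} bound $O(\epsilon^{-2}(\VCdim(\mathcal{F}) + \log\tfrac{1}{\delta}))$ without a spurious $\log\tfrac1\epsilon$ factor: the naive union bound over $(2m)^{\VCdim(\mathcal{F})}$ patterns forces a self-referential inequality $m \gtrsim \epsilon^{-2}\VCdim(\mathcal{F})\log m$, whose resolution introduces an extra logarithmic term. Eliminating that factor is exactly the technical contribution of~\cite{DBLP:journals/jcss/LiLS01}, achieved through a finer chaining/partitioning analysis in place of a single union bound. Since the statement is quoted verbatim from their work, I would use the symmetrization reduction above only as the conceptual skeleton and invoke their sharp estimate directly for the final counting step.
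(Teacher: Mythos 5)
Your proposal matches the paper's treatment of this statement: the paper gives no proof of its own, quoting the theorem directly from Li, Long, and Srinivasan~\cite{DBLP:journals/jcss/LiLS01}, so your deferral of the sharp $O\left(\epsilon^{-2}(\VCdim(\mathcal{F}) + \log \frac{1}{\delta})\right)$ bound to that citation is exactly what the paper does. Your accompanying sketch is also accurate — the ghost-sample symmetrization plus Sauer--Shelah union bound indeed yields only the weaker self-referential bound with a spurious logarithmic factor, and eliminating it is precisely the contribution of~\cite{DBLP:journals/jcss/LiLS01}.
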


We now apply Theorem~\ref{theorem:approx},
taking $\mathcal{F} = \mathcal{T}$ and $\mathcal{D}$ as the uniform distribution over $P$,
and use the VC dimension bound from Proposition~\ref{prop:vcdim}. 

\begin{corollary}\label{cor:approx}
    Let $P \subset \Rd$ be finite and let $\epsilon \in (0,1)$.
    With probability at least $1-\delta$, a uniform sample $Q \subseteq P$ of size $O(\epsilon^{-2} \log \frac{d}{\delta})$ is an $\epsilon$-approximation for $P$.
\end{corollary}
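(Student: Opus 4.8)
The plan is to combine Proposition~\ref{prop:vcdim} with Theorem~\ref{theorem:approx} and then interpret the resulting guarantee as an $\epsilon$-approximation in the sense of Definition~\ref{def:approximation}. First I would instantiate Theorem~\ref{theorem:approx} with the function class $\mathcal{F} = \mathcal{T}$ of axis-aligned threshold functions and with $\mathcal{D}$ equal to the uniform distribution over $P$. Under this choice, for a fixed coordinate $i$ and threshold $r$, the expectation $\EX_{x\sim\mathcal{D}}[\tau_{i,r}(x)]$ is exactly $\frac{1}{|P|}\sum_{p\in P}\tau_{i,r}(p) = \ord_P(i,r)$, and the empirical average $\frac{1}{|Q|}\sum_{x\in Q}\tau_{i,r}(x)$ equals $\ord_Q(i,r)$. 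So the conclusion of Theorem~\ref{theorem:approx} reads, simultaneously for all $i\in[d]$ and all $r\in\RR$, as $|\ord_Q(i,r)-\ord_P(i,r)|\le\epsilon$, which is precisely the defining condition of an $\epsilon$-approximation.

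Next I would substitute the VC-dimension bound into the sample-size expression. By Proposition~\ref{prop:vcdim} we have $\VCdim(\mathcal{T})\le 2\log d$, so Theorem~\ref{theorem:approx} guarantees that a uniform sample of size
\[
  O\!\left(\epsilon^{-2}\bigl(\VCdim(\mathcal{T}) + \log\tfrac{1}{\delta}\bigr)\right)
  = O\!\left(\epsilon^{-2}\bigl(\log d + \log\tfrac{1}{\delta}\bigr)\right)
  = O\!\left(\epsilon^{-2}\log\tfrac{d}{\delta}\right)
\]
suffices, where the last equality folds $\log d + \log\frac1\delta = \log\frac d\delta$ into a single logarithm. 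This yields exactly the claimed size bound, and the failure probability $\delta$ carries over unchanged since we are applying the cited theorem verbatim.

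I do not expect a serious obstacle here, since the corollary is essentially a direct specialization of Theorem~\ref{theorem:approx} once the VC bound is in hand; the only points requiring a line of care are confirming that the class $\mathcal{T}$ is indeed a class of $\{0,1\}$-valued functions as Theorem~\ref{theorem:approx} requires (it is, by the definition $\tau_{i,r}(x)=\mathbbm{1}_{\{x[i]\le r\}}$), and checking that the uniform-distribution expectation and the empirical average match $\ord_P$ and $\ord_Q$ on the nose rather than up to a rescaling. The mildly subtle bookkeeping detail is that $\mathcal{T}$ is an infinite class (indexed by a real threshold $r$), but Theorem~\ref{theorem:approx} only needs a finite VC dimension, not a finite class, so Proposition~\ref{prop:vcdim} is exactly what licenses its application.
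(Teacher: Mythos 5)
Your proof is correct and follows essentially the same route as the paper: the paper likewise obtains the corollary by applying Theorem~\ref{theorem:approx} with $\mathcal{F}=\mathcal{T}$ and $\mathcal{D}$ the uniform distribution over $P$, and then substituting the VC-dimension bound from Proposition~\ref{prop:vcdim}. Your additional bookkeeping (identifying the expectation with $\ord_P$ and the empirical average with $\ord_Q$, and folding $\log d + \log\tfrac{1}{\delta}$ into $\log\tfrac{d}{\delta}$) is exactly the implicit content of the paper's one-line derivation.
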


We now turn to showing how the above machinery
can be applied to the $1$-median problem under the $\ell_1$ metric. 
Our main technical lemma, 
shows that an $\epsilon$-approximation subset is also $O(\epsilon)$-RCDA. 

\begin{restatable}{lemma}{mainlemma}\label{lemma:mainlemma}
Let $\epsilon \in (0,1)$ and let $P$ be a finite set in $\ell_1^d$. If $Q$ is an $\epsilon$-approximation of $P$,
then $Q$ is a $20\epsilon$-RCDA of $P$.
\end{restatable}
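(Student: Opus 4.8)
The plan is to exploit the coordinate-wise separability of the $\ell_1$ cost, reducing the $d$-dimensional RCDA inequality~\eqref{eq:rcda} to $d$ independent one-dimensional estimates, each controlled by the $\epsilon$-approximation guarantee. Since $\normlone{x-p} = \sum_{i=1}^d |x[i]-p[i]|$, averaging over $p$ gives $\avgcost(x,P) = \sum_{i=1}^d g_{P,i}(x[i])$, where $g_{P,i}(t) := \frac{1}{|P|}\sum_{p\in P}|t - p[i]|$ is the normalized one-dimensional cost of placing coordinate $i$ at $t$. Hence the quantity inside the absolute value in~\eqref{eq:rcda} splits, by the triangle inequality, into a sum over coordinates of the terms $\Delta_i := [g_{P,i}(x[i]) - g_{P,i}(\median[i])] - [g_{Q,i}(x[i]) - g_{Q,i}(\median[i])]$, so it suffices to bound each $|\Delta_i|$.

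The crux is a one-dimensional identity relating a cost difference to the empirical distribution function. First I would observe that $g_{P,i}$ is convex, piecewise linear, and $1$-Lipschitz, and that at every $t$ avoiding the finitely many values $\{p[i]\}$ its derivative equals $g_{P,i}'(t) = 2\ord_P(i,t) - 1$, since $\sgn(t-p[i])$ counts points below $t$ minus points above. Integrating (the function is absolutely continuous, so the fundamental theorem of calculus applies despite the kinks) yields $g_{P,i}(b) - g_{P,i}(a) = \int_a^b (2\ord_P(i,t) - 1)\,dt$, and the analogous formula for $Q$. Subtracting, the $-1$ terms cancel and I obtain $\Delta_i = \int_{\median[i]}^{x[i]} 2\,(\ord_P(i,t) - \ord_Q(i,t))\,dt$. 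The $\epsilon$-approximation hypothesis bounds the integrand uniformly by $2\epsilon$, giving $|\Delta_i| \leq 2\epsilon\,|x[i]-\median[i]|$.

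Summing over coordinates yields $\big| \sum_{i} \Delta_i \big| \leq 2\epsilon \normlone{x - \median}$, so it remains to compare $\normlone{x-\median}$ with $\avgcost(x,P)$. Here the triangle inequality $\normlone{x-\median} \leq \normlone{x-p} + \normlone{p-\median}$, averaged over $p\in P$, gives $\normlone{x-\median} \leq \avgcost(x,P) + \avgcost(\median,P)$, and since $\median$ is an optimal median for $P$ we have $\avgcost(\median,P) \leq \avgcost(x,P)$; hence $\normlone{x-\median} \leq 2\,\avgcost(x,P)$. Combining the two displays gives the RCDA bound with a factor of $4\epsilon \leq 20\epsilon$, completing the proof (so the stated constant leaves ample slack). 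The step I expect to be most delicate is the one-dimensional identity: care is needed to argue that the finitely many kink points of $g_{P,i}$ are negligible in the integral and that ties among coordinate values do not affect the derivative formula off a measure-zero set; everything else is bookkeeping via the triangle inequality.
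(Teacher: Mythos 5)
Your proof is correct, and it takes a genuinely different route from the paper's at the key technical step. The paper also reduces to coordinates, but then argues combinatorially in one dimension: it splits the line into three regions (left of $x$, between $x$ and $\median$, beyond $\median$), derives the exact identity $\avgcost(x,P)-\avgcost(\median,P)=|x-\median|\,(1-2\ord_P(x))-\tfrac{2}{|P|}\sum_{p\in M}|\median-p|$, and then controls the middle-region sums by partitioning $M$ into intervals of $2\epsilon|P|$ points each and telescoping, which costs $8\epsilon|x-\median|$ and yields a per-coordinate bound of $10\epsilon|x-\median|$. Your integral identity $g_{P,i}(b)-g_{P,i}(a)=\int_a^b \bigl(2\ord_P(i,t)-1\bigr)\,dt$ replaces all of that bookkeeping: subtracting the analogous identity for $Q$ and invoking the $\epsilon$-approximation hypothesis pointwise on the integrand gives $|\Delta_i|\le 2\epsilon\,|x[i]-\median[i]|$ in one stroke, which is both cleaner and quantitatively sharper (your final constant is $4$ versus the paper's $20$). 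Your normalization step also differs: the paper uses the median structure of each coordinate --- half the mass of $P$ lies beyond $\median[i]$, so $\avgcost_i(x,P)\ge\tfrac12|x[i]-\median[i]|$, which relies on the coordinate-wise characterization of $\ell_1$ medians --- whereas you use only the triangle inequality and the global optimality of $\median$ to get $\normlone{x-\median}\le\avgcost(x,P)+\avgcost(\median,P)\le 2\avgcost(x,P)$; this is arguably more robust, and consistent with the paper's remark that $\median$ could be replaced by any fixed reference point at the cost of constants. Finally, the delicacy you flag is not a real obstacle: each $g_{P,i}$ is a finite average of $1$-Lipschitz functions, hence $1$-Lipschitz and absolutely continuous, so the fundamental theorem of calculus applies with the derivative defined off the finitely many kink points, a null set.
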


\begin{proof} [Proof of \Cref{thm:main}]
By Corollary~\ref{cor:approx} and Lemma~\ref{lemma:mainlemma},
a uniform sample $Q$ of size $O(\epsilon^{-2}\log d)$
yields an $\epsilon$-RCDA of $P$ with large constant probability.
Observe that
\[
  \EX[\avgcost (\median,Q)]
  = \EX\Big[\frac{1}{|Q|}\sum_{q \in Q} \normlone{\median-q}\Big]
  = \frac{1}{|P|}\sum_{p \in P} \normlone{\median-p} = \avgcost (\median,P) , 
\]
and thus by Markov's inequality,
$
\Pr[\avgcost (\median,Q) \geq 6 \avgcost (\median,P)] \leq \frac{1}{6}.
$
By a union bound, with probability at least $\frac{4}{5}$,
we have both that $Q$ is an $\epsilon$-RCDA of $P$
and that $\avgcost (\median,Q) < 6 \avgcost (\median,P)$.
To complete the proof of \Cref{thm:main}, 
we now apply our framework, namely, \Cref{thm:framework}. 
\end{proof}

\section{Applications}
\label{section:applications}

Several applications of~\Cref{thm:main} follow immediately from known isometric embeddings into $\ell_1$.
In particular, Hamming distance, Kendall-tau, Spearman-footrule,
and certain graph-based metrics, such as tree metrics,
can all be embedded isometrically into $\ell_1$,
allowing our coreset results to transfer directly to these spaces,
see e.g.\ \Cref{cor:Kendall_tau}. 
Another application that arises in computational biology is the genome-median problem,
where the goal is to find a consensus genome that minimizes the total evolutionary distance to a set of input genomes.
The breakpoint distance  is a common metric for genomic comparison
that can sometimes (i.e., under some restrictions)
be embedded isometrically in $\ell_1$,
with only a quadratic increase in the dimension~\cite{DBLP:journals/bmcbi/TannierZS09}, making our approach applicable.
Below, we address the Jaccard metric separately,
since its isometric embedding requires a high dimension;
however, low-distortion embeddings can be easily employed instead.

\paragraph{Near-isometric embeddings.}
We extend our results to metrics that embed into $\ell_1$ with distortion close to $1$, showing they admit stable coresets with parameters adjusted according to the distortion.
As usual, an \emph{embedding} between metric spaces
$(\mathcal{X}_1, \dist_1)$ and $(\mathcal{X}_2, \dist_2)$
is a map $f: \Xspace_1 \rightarrow \Xspace_2$. 
We say that it has \emph{distortion} $D^2\ge1$,
if there exists $r>0$ (scaling factor) such that
\begin{equation}\label{eq:almost_embedding}
  \forall x,y \in \Xspace_1,
  \qquad
  \tfrac{1}D\cdot \dist_2(f(x),f(y))
  \leq r\cdot \text{dist}_1(x,y)
  \leq D \cdot \dist_2(f(x),f(y)).
\end{equation}
One can often assume that $r=1$ by scaling $f$, e.g., when the target $\Xspace_2$ is a normed space.
We mostly use the case $D=1+\zeta$ for $\zeta\in(0,1)$,
and then the distortion is $D^2=1+O(\zeta)$.

\begin{restatable}{proposition}{almostisom}\label{prop:almostisom}
Let $f:\Xspace_1 \rightarrow \Xspace_2$ be an embedding between metric spaces $(\Xspace_1,\text{dist}_1)$ and $(\Xspace_2,\text{dist}_2)$ with distortion $D$ and scaling factor $r$. 
For every $Q \subseteq P \subseteq \Xspace_1$, 
if $f(Q)$ is a stable $(\epsilon, \eta)$-coreset of $f(P)$ in $\Xspace_2$
for some $\epsilon,\eta>0$,
and the values $\epsilon' := (1+\epsilon)/D^2 -1$ and $\eta' := D^2(1+\eta) - 1$ are positive,
then $Q$ is a stable $(\epsilon', \eta')$-coreset of $P$ in $\Xspace_1$.
\end{restatable}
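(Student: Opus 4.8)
The plan is to lift the pointwise distortion of $f$ to a distortion bound on the cost function, and then chase the stable-coreset implication back and forth through $f$. First, from~\eqref{eq:almost_embedding} I would derive that for every center $c \in \Xspace_1$ and every finite $S \subseteq \Xspace_1$,
\[
  \tfrac{r}{D}\, \cost(c, S) \;\leq\; \cost(f(c), f(S)) \;\leq\; r D \cdot \cost(c, S),
\]
obtained by summing the pointwise bounds $\tfrac{r}{D}\dist_1(c,p) \le \dist_2(f(c),f(p)) \le rD\,\dist_1(c,p)$ over all $p \in S$. This single inequality, applied once with $S = Q$ and once with $S = P$, is the only tool needed.

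Next, I would fix arbitrary $c_1, c_2 \in \Xspace_1$ and assume the hypothesis $\cost(c_1, Q) \leq (1+\epsilon')\,\cost(c_2, Q)$. Applying the upper bound to $c_1$, then the hypothesis, and then the lower bound to $c_2$ yields
\[
  \cost(f(c_1), f(Q)) \;\leq\; r D\,(1+\epsilon')\,\cost(c_2, Q) \;\leq\; D^2(1+\epsilon')\,\cost(f(c_2), f(Q)).
\]
By the definition $\epsilon' = (1+\epsilon)/D^2 - 1$ we have $D^2(1+\epsilon') = 1+\epsilon$, so this reads exactly $\cost(f(c_1), f(Q)) \le (1+\epsilon)\,\cost(f(c_2), f(Q))$. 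I would then invoke the stable $(\epsilon,\eta)$-coreset property of $f(Q)$ for $f(P)$, taking the two centers to be $f(c_1), f(c_2) \in \Xspace_2$, to obtain $\cost(f(c_1), f(P)) \le (1+\eta)\,\cost(f(c_2), f(P))$.

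Finally, translating this conclusion back to $\Xspace_1$ with the cost-distortion bound---lower bound on the left, upper bound on the right---gives
\[
  \cost(c_1, P) \;\leq\; D^2(1+\eta)\,\cost(c_2, P) \;=\; (1+\eta')\,\cost(c_2, P),
\]
using $\eta' = D^2(1+\eta) - 1$. Since $c_1, c_2 \in \Xspace_1$ were arbitrary, $Q$ is a stable $(\epsilon', \eta')$-coreset of $P$ in $\Xspace_1$, as claimed.

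The argument is essentially bookkeeping, and I note that the scaling factor $r$ cancels in both translations, so it plays no role and the argument is scale-invariant. The only point requiring care is consistently selecting the correct direction of the distortion inequality at each of the two translations (upper-then-lower going into $\Xspace_2$, lower-then-upper coming back). The positivity assumptions on $\epsilon'$ and $\eta'$ are exactly what ensure that the intermediate premise and the final conclusion are genuine (nonvacuous) stable-coreset conditions rather than degenerate ones.
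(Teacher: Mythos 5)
Your proof is correct and follows essentially the same route as the paper's: both chain the hypothesis through the cost-level distortion bounds (upper bound into $\Xspace_2$, lower bound back to $\Xspace_1$), apply the stable-coreset property of $f(Q)$ at the images $f(c_1),f(c_2)$, and note that the choices of $\epsilon'$ and $\eta'$ make the factors $D^2(1+\epsilon')$ and $D^2(1+\eta)$ equal $1+\epsilon$ and $1+\eta'$ exactly, with the scaling factor $r$ cancelling.
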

This proposition extends our results in \Cref{thm:main}
to metric spaces that can be embedded into $\ell_1$ with small distortion.
For example, our results extend to the Euclidean metric using Dvoretzky's Theorem~\cite{Gordon1988, schechtman2006remark}, 
yielding stable coresets of size $O(\epsilon^{-2}\log (d/\epsilon))$ for $1$-median in $\ell_2^d$, see \Cref{app:euclidean}.
We remark that it suffices to have the distortion guarantee~\eqref{eq:almost_embedding}
only for pairs that involve a point from $P$,
which is known in the literature as \emph{terminal embedding},
see \Cref{app:almostisometric}.

\begin{corollary}\label{corollary:almostell1}
Let $(\Xspace,\text{dist})$ be a metric space that embeds in $(\Rd, \normlone{\cdot
})$ with distortion $1+\frac{\epsilon}{3}$ for $\epsilon\in(0,\frac{1}{10})$.
Then a uniform sample of size $O(\epsilon^{-2}\log d)$ from a finite $P \subseteq \Xspace$
is a stable $(\epsilon,O(\epsilon))$-coreset for $1$-median in $\Xspace$ 
with probability at least $\frac{4}{5}$.
\end{corollary}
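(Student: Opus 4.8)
The plan is to reduce this corollary to \Cref{thm:main} via the distortion-transfer result \Cref{prop:almostisom}, so that the embedding is the only genuinely new ingredient. Fix an embedding $f:\Xspace\to\Rd$ realizing distortion $D=1+\tfrac{\epsilon}{3}$ into $\ell_1^d$. First I would record that $f$ is injective: the lower bound in~\eqref{eq:almost_embedding} forces $\dist_2(f(x),f(y))>0$ whenever $\dist_1(x,y)>0$. By the observation following \Cref{fact:isometric}, injectivity means that drawing a uniform sample $Q\subseteq P$ and pushing it forward yields $f(Q)\subseteq f(P)$, which is itself a uniform sample of $f(P)\subset\Rd$ of the same size. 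Thus the probabilistic content lives entirely on the $\ell_1$ side, where \Cref{thm:main} applies, while the pullback step is deterministic.

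Next I would apply \Cref{thm:main} to the instance $f(P)\subset\Rd$ with a parameter $\epsilon_0:=\Theta(\epsilon)$ chosen below. This produces, with probability at least $4/5$ and using a sample of size $O(\epsilon_0^{-2}\log d)=O(\epsilon^{-2}\log d)$, a stable $(\tfrac{\epsilon_0}{6},4\epsilon_0)$-coreset $f(Q)$ of $f(P)$ in $\ell_1^d$. Feeding this into \Cref{prop:almostisom} transfers the guarantee back to $\Xspace$: $Q$ becomes a stable $(\epsilon',\eta')$-coreset of $P$, where $\epsilon'=(1+\tfrac{\epsilon_0}{6})/D^2-1$ and $\eta'=D^2(1+4\epsilon_0)-1$. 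Since $D^2=1+O(\epsilon)$ and $\epsilon_0=\Theta(\epsilon)$, one reads off $\eta'=O(\epsilon)$ immediately, so the whole composition keeps both parameters of order $\epsilon$.

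The delicate point — and the only place where the precise constants matter — is the first parameter $\epsilon'$. Because pulling back across a distortion-$D^2$ embedding divides the first parameter by $D^2>1$, the slack $\tfrac{\epsilon_0}{6}$ gained from \Cref{thm:main} is partially consumed by the factor $D^2=1+O(\epsilon)$, so $\epsilon_0$ must be taken as a sufficiently large constant multiple of $\epsilon$ to guarantee both $\epsilon'>0$ (required for \Cref{prop:almostisom} to apply) and $\epsilon'\ge\epsilon$. Concretely, $\epsilon'\ge\epsilon$ is equivalent to $1+\tfrac{\epsilon_0}{6}\ge(1+\epsilon)D^2$, i.e.\ $\epsilon_0\ge 6\big[(1+\epsilon)D^2-1\big]=\Theta(\epsilon)$; this is exactly where the hypotheses $\epsilon\in(0,\tfrac1{10})$ and distortion $1+\tfrac{\epsilon}{3}$ are calibrated, keeping $\epsilon_0$ within the admissible range of \Cref{thm:main} while ensuring $\eta'$ stays $O(\epsilon)$.

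Finally I would invoke the monotonicity of the stable-coreset definition in its two parameters: a stable $(a,b)$-coreset is also a stable $(a',b')$-coreset whenever $a'\le a$ and $b'\ge b$, since shrinking the first parameter only weakens the hypothesis of~\eqref{eq:stable} and enlarging the second only weakens its conclusion. Applying this with $a=\epsilon'\ge\epsilon$ and $b=\eta'=O(\epsilon)$ downgrades the stable $(\epsilon',\eta')$-coreset to a stable $(\epsilon,O(\epsilon))$-coreset, which is the claim; the probability $4/5$ and the size $O(\epsilon^{-2}\log d)$ are inherited unchanged because the pullback and monotonicity steps are deterministic. I expect the main obstacle to be purely the constant bookkeeping of the third paragraph, namely verifying $\epsilon'\ge\epsilon$, $\epsilon',\eta'>0$, and $\epsilon_0$ admissible for \Cref{thm:main} all at once.
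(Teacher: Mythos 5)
Your proposal follows essentially the same route as the paper, which presents \Cref{corollary:almostell1} as an immediate consequence of \Cref{thm:main} and \Cref{prop:almostisom} without spelling out the details: push the uniform sample through the embedding (your injectivity observation is correct), invoke \Cref{thm:main} on $f(P)$ with a rescaled parameter $\epsilon_0=\Theta(\epsilon)$, pull the guarantee back via \Cref{prop:almostisom}, and finish with the monotonicity of~\eqref{eq:stable} in its two parameters; all of these steps, including your computation $\epsilon_0\ge 6\big[(1+\epsilon)D^2-1\big]=8\epsilon+2\epsilon^2$, are sound. The one inaccurate claim is that the hypotheses are ``calibrated'' to keep $\epsilon_0$ within the admissible range of \Cref{thm:main}: that theorem requires $\epsilon_0\le\frac{1}{5}$, so your choice is legal only when $8\epsilon+2\epsilon^2\le\frac{1}{5}$, i.e.\ roughly $\epsilon\le\frac{1}{41}$, whereas the corollary permits $\epsilon$ up to $\frac{1}{10}$ (for $\epsilon$ near $\frac{1}{10}$ one would need $\epsilon_0\approx 0.82$, and even taking $\epsilon_0=\frac15$ yields a first parameter $\frac{1}{30}$ in $\ell_1$, which pulls back to a value below $\epsilon$). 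This constants mismatch is inherited from the paper itself---\Cref{corollary:almostell1} as stated does not literally follow from the stated constants of \Cref{thm:main} and \Cref{prop:almostisom} over the whole range $\epsilon\in(0,\frac{1}{10})$---so it is a defect of the statement rather than of your strategy; it is repaired either by shrinking the admissible range of $\epsilon$ by a constant factor or by reading the corollary's first parameter as $\Theta(\epsilon)$, neither of which affects the $O(\epsilon^{-2}\log d)$ sample size or the $\frac{4}{5}$ success probability.
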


\paragraph{Stable coresets in Jaccard metric.}
Consider the Jaccard metric over $d$ elements,
i.e., over ground set $[d]$ without loss of generality.
It follows immediately from~\cite{DBLP:conf/stoc/BroderCFM98} that the Jaccard metric 
embeds with distortion $1+\zeta$ into $\ell_1$ space of dimension $O(\zeta^{-2} d^3)$.
Thus, Corollary~\ref{corollary:almostell1} implies coresets for the Jaccard metric, as follows.

\begin{corollary}
\label{cor:Jaccard}
Let $P \subseteq 2^{[d]}$ and let $\epsilon\in (0,\frac{1}{10})$.
Then a uniform sample of size $O(\epsilon^{-2}\log (d/\epsilon))$ from $P$
is a stable $(\epsilon,O(\epsilon))$-coreset for $1$-median in Jaccard metric with probability at least $\frac{4}{5}$.
\end{corollary}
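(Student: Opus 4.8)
The plan is to reduce the statement to Corollary~\ref{corollary:almostell1} by supplying the low-distortion embedding of the Jaccard metric into $\ell_1$ quoted immediately above that corollary. First I would invoke the embedding fact derived from~\cite{DBLP:conf/stoc/BroderCFM98}: for any target distortion $1+\zeta$, the Jaccard metric over the ground set $[d]$ embeds into $\ell_1^{D}$ with target dimension $D = O(\zeta^{-2} d^3)$. Because this map embeds the \emph{entire} finite metric space $2^{[d]}$ (not merely the input $P$), it satisfies the hypothesis of Corollary~\ref{corollary:almostell1}, which requires the ambient space $\Xspace$ itself to embed into $\ell_1$; this in turn is what makes the transfer principle for stable coresets through embeddings (Proposition~\ref{prop:almostisom}) available.

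The one calibration I need is to set $\zeta = \epsilon/3$, so that the distortion becomes exactly $1+\epsilon/3$, matching the assumption of Corollary~\ref{corollary:almostell1} for $\epsilon \in (0,\frac{1}{10})$. With this choice the target dimension is $D = O(\epsilon^{-2} d^3)$. Applying the corollary with $\Xspace$ taken to be the Jaccard metric over $[d]$ then yields that a uniform sample of size $O(\epsilon^{-2}\log D)$ from $P$ is a stable $(\epsilon, O(\epsilon))$-coreset for $1$-median in the Jaccard metric, with probability at least $\frac{4}{5}$. Note that the dimension entering the sample-size bound of Corollary~\ref{corollary:almostell1} is the \emph{target} dimension $D$ of the embedding, not the ground-set size $d$.

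It then remains only to simplify the sample size. I would compute
\[
  \log D = \log\big(O(\epsilon^{-2} d^3)\big) = O\big(\log(1/\epsilon) + \log d\big) = O\big(\log(d/\epsilon)\big),
\]
so that $O(\epsilon^{-2}\log D) = O(\epsilon^{-2}\log(d/\epsilon))$, exactly the bound claimed. This completes the argument.

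I do not expect a genuine obstacle, since the proof is a direct composition of the quoted embedding fact with Corollary~\ref{corollary:almostell1}. The only points requiring care are bookkeeping ones: collapsing $\log(\epsilon^{-2} d^3)$ into $\log(d/\epsilon)$ up to constant factors, and keeping track of which notion of ``dimension'' enters the sample-size bound (the target dimension $D$ rather than the ground-set size $d$). Both are routine once the embedding parameters are fixed, and the finiteness of $2^{[d]}$ ensures the embedding can be taken as a fixed map realizing the stated distortion.
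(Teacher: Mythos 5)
Your proposal is correct and matches the paper's own argument exactly: the paper likewise obtains \Cref{cor:Jaccard} by combining the distortion-$(1+\zeta)$ embedding of the Jaccard metric into $\ell_1^{O(\zeta^{-2}d^3)}$ from~\cite{DBLP:conf/stoc/BroderCFM98} with \Cref{corollary:almostell1}, the sample size then being $O(\epsilon^{-2}\log(\epsilon^{-2}d^3)) = O(\epsilon^{-2}\log(d/\epsilon))$. Your write-up merely makes explicit the calibration $\zeta = \epsilon/3$ and the target-dimension bookkeeping that the paper leaves implicit.
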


This result provides the first coreset construction based on uniform sampling for the Jaccard metric. 
Prior work implies a strong coreset of size $\tilde{O}( \epsilon^{-4} k^2)$, 
because it holds for $k$-median in $\ell_1$ by \cite{DBLP:journals/ml/JiangKLZ24}, 
however its construction algorithm must read the entire dataset.

\paragraph{Approximation algorithms for \texorpdfstring{$k$}{k}-median.}
In metric spaces that admit stable coresets through uniform sampling,
we can employ the framework introduced by 
~\cite{DBLP:conf/focs/KumarSS04}
and further refined for additional metric spaces by 
~\cite{DBLP:journals/talg/AckermannBS10}.
These papers design approximation algorithms for $k$-median 
in metric spaces that have the property that for every instance $P$, 
a $1$-median of $P$ is approximated (with high probability)
by an optimal, or approximately optimal, solution for a uniform sample $Q \subseteq P$. 
We restate the main theorem in~\cite{DBLP:journals/talg/AckermannBS10} using the language of stable coresets,
and it can now be applied to several metrics where it was previously unknown, 
including Hamming, Kendall-tau and Jaccard, 
thereby extending existing $1$-median algorithms to the more general $k$-median problem. 

\begin{theorem}[Theorem 1.1 in \cite{DBLP:journals/talg/AckermannBS10}]
Let $\epsilon\in(0,\frac{1}{5})$
and let $(\Xspace, \dist)$ be a metric space such that
\begin{itemize}
\item $\Xspace$ admits a stable $(\epsilon,O(\epsilon))$-coreset through uniform sampling of size $s_{\epsilon}$ with constant probability;
  and 
\item $1$-median in $\Xspace$ admits a $(1+\epsilon)$-approximation algorithm that runs on input of size $n'$ in time $f(n', \epsilon)$.
\end{itemize}
Then $k$-median in $\Xspace$ admits a $(1+O(\epsilon))$-approximation algorithm that runs on input of size $n$ in time
$
f(s_\epsilon,\epsilon)\cdot (k s_{\epsilon}/\epsilon))^{O(ks_{\epsilon})} n
$.
\end{theorem}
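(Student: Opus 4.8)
The plan is to show that the stable-coreset hypothesis supplies exactly the ingredient needed by the sampling-based $k$-median framework of Kumar, Sabharwal, and Sen, as abstracted by Ackermann, Bl\"omer, and Sohler, and then to run their recursion. The only place where the metric $\Xspace$ enters that framework is through a \emph{sample-and-solve} subroutine for $1$-median: given (implicit) access to uniform samples from an unknown cluster $P'$, one must be able to produce, with constant probability, a center that is a $(1+O(\epsilon))$-approximate $1$-median of $P'$. I would first isolate this subroutine as a lemma and prove it from our hypotheses, and then feed it into the (metric-oblivious) recursion, whose running time and correctness analysis is inherited essentially verbatim.

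First I would establish the subroutine. By hypothesis a uniform sample $\coreset \subseteq P'$ of size $s_\epsilon$ is, with constant probability, a stable $(\epsilon,O(\epsilon))$-coreset of $P'$; by \Cref{prop:hierarchy} it is then also a \emph{weak} $(\epsilon,O(\epsilon))$-coreset of $P'$. Running the assumed $(1+\epsilon)$-approximation algorithm for $1$-median on $\coreset$, in time $f(s_\epsilon,\epsilon)$, yields a center $c$ with $\cost(c,\coreset)\le(1+\epsilon)\opt(\coreset)$, and the weak-coreset guarantee~\eqref{eq:weak2} immediately upgrades this to $\cost(c,P')\le(1+O(\epsilon))\opt(P')$. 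The remaining difficulty is that we do not know $P'$: we only get to sample from a \emph{superset} $S\supseteq P'$ in which $P'$ occupies an $\alpha$-fraction. Here I would use superset sampling: a uniform sample of $m=O(s_\epsilon/\alpha)$ points from $S$ contains at least $s_\epsilon$ points of $P'$ with constant probability, and conditioned on their identities these are distributed as a uniform sample of $P'$. Since the identities are unknown, we enumerate all $\binom{m}{s_\epsilon}$ size-$s_\epsilon$ subsets of the sample and run the $1$-median algorithm on each; one resulting candidate center is a $(1+O(\epsilon))$-approximate $1$-median of $P'$.

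Next I would assemble the recursion. Fix an optimal $k$-median solution with clusters $P_1,\dots,P_k$, ordered by size, and write $\opt_k(P)=\sum_j\opt(P_j)$ for the optimal $k$-median cost. It suffices to produce, for each $j$, a candidate $c_j$ that is a $(1+O(\epsilon))$-approximate $1$-median of $P_j$, since then assigning each point of $P_j$ to $c_j$ gives $\cost(\{c_1,\dots,c_k\},P)\le\sum_j\cost(c_j,P_j)\le(1+O(\epsilon))\sum_j\opt(P_j)=(1+O(\epsilon))\opt_k(P)$; crucially the error does not compound over levels, because the total cost is bounded by a sum of independent per-cluster guarantees. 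The algorithm is a depth-$k$ branching search that, at the level choosing $c_j$ after committing to $c_1,\dots,c_{j-1}$, forms an active set $S$ by deleting from $P$ a guessed number of points closest to the chosen centers; for the correct guess the already-served points are removed and $P_j$ becomes an $\Omega(\epsilon/k)$-fraction of $S$. Taking $\alpha=\Omega(\epsilon/k)$ and $m=O(ks_\epsilon/\epsilon)$ makes the subroutine succeed, so each level branches into at most $\binom{m}{s_\epsilon}=(ks_\epsilon/\epsilon)^{O(s_\epsilon)}$ candidate centers (times a negligible number of peeling guesses). The tree then has $(ks_\epsilon/\epsilon)^{O(ks_\epsilon)}$ leaves, each scored by assigning all $n$ points to their nearest chosen center in $O(n)$ time while each sampled subset costs $f(s_\epsilon,\epsilon)$, yielding the claimed bound $f(s_\epsilon,\epsilon)\cdot(ks_\epsilon/\epsilon)^{O(ks_\epsilon)}\cdot n$.

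I expect the main obstacle to be the peeling analysis guaranteeing that the target cluster $P_j$ is a large enough fraction of the active set $S$. The difficulty is that the chosen centers $c_1,\dots,c_{j-1}$ are only approximately optimal, so the set of points one \emph{should} delete (those genuinely served by them) does not coincide with the set actually deleted (the closest points by count); one must argue that some geometric guess deletes essentially the served points while charging any mismatch to the $O(\epsilon)$ error budget. This is precisely the combinatorial core of the Kumar--Sabharwal--Sen and Ackermann--Bl\"omer--Sohler arguments, which I would invoke rather than reprove, since it never uses properties of $\Xspace$ beyond the sample-and-solve subroutine established above. A secondary point is probability amplification: the single correct root-to-leaf path passes through $k$ sampling nodes, so I would boost each node's success probability to $1-O(1/k)$ — by mildly enlarging the sample and exploiting that the subset enumeration already tries $\Theta(\log k)$ disjoint blocks of $P_j$-samples, one of which is a stable coreset with high probability — and then union-bound over the $k$ nodes of the path, with the extra factors absorbed into the stated running time.
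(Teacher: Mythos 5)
Your proposal is correct and follows essentially the same route as the paper, which presents this theorem as a restatement of \cite{DBLP:journals/talg/AckermannBS10}, Theorem 1.1: the only bridge the paper needs is the one you supply, namely that a stable $(\epsilon,O(\epsilon))$-coreset is also a weak $(\epsilon,O(\epsilon))$-coreset (\Cref{prop:hierarchy}), so running the assumed $(1+\epsilon)$-approximation algorithm on a uniform sample yields, with constant probability, a $(1+O(\epsilon))$-approximate $1$-median of the target cluster, which is exactly the sampling property that the Kumar--Sabharwal--Sen and Ackermann--Bl\"omer--Sohler recursion requires. Your reconstruction of the superset sampling, subset enumeration, peeling, and $\log k$ amplification is the same machinery as in the cited work, and invoking its combinatorial core rather than reproving it is precisely what the paper does.
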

An alternative approach for $\ell_1$ metrics, proposed in~\cite{DBLP:journals/ml/JiangKLZ24}, 
is to build a strong coreset of size $\Tilde{O}(\epsilon^{-4} k^2)$
and then solve the coreset instance by enumerating over all its $k$-partitions. 
In both approaches the running time is exponential in the coreset size, 
and in many reasonable settings, 
this exponential term is larger than the input size $n$ and thus dominates the total running time.

\newcommand{\instanceproperty}{dispersed}
\paragraph{$C$-\instanceproperty{} instances.}
We can actually refine \Cref{thm:main} to prove that 
when the input $P$ satisfies a certain technical condition,
uniform sampling yields a strong coreset, rather than merely a stable coreset. 
This refinement is particularly valuable when
$(1+\beta)$-approximation algorithms, for $\beta \gg \epsilon$, are available,
because applying such an algorithm on the strong coreset
achieves $(1+\beta)(1+O(\epsilon))$-approximation for the original input $P$,
offering a significant speedup with marginal increase in error,
compared to applying that same algorithm directly on $P$.
This advantage becomes especially significant for in discrete metric spaces—such as Kendall-tau and Jaccard—where the median problem is NP-hard.
Although PTAS algorithms exist for these metrics, they are often complex to implement and computationally expensive, making them impractical.
In such scenarios, constant-factor approximation algorithms and heuristics offer a more accessible and efficient alternative.

The technical condition we require is quite simple,
and has been used in some literature without defining or stating it explicitly.
We say that an instance $P$ is \emph{$C$-\instanceproperty{}} for $C\ge 1$ 
if its diameter is at most $C$ times the average distance inside it,
that is, 
\begin{align*}
  \max_{x,y \in P}\normlone{x-y}
  \leq
  C\cdot \frac{1}{|P|^2}\sum_{x,y \in P}\normlone{x-y} .
\end{align*}

\begin{restatable}{theorem}{boundedinstance}\label{thm:boundedinstances}
Let $P \subset \Rd$ be finite and \emph{$C$-\instanceproperty{}},
and let $\epsilon\in(0,\frac{1}{5})$.
Then a uniform sample of size $O(C \epsilon^{-2} \log \frac{d}{\delta})$ from $P$
is a strong $\epsilon$-coreset for $1$-median in $\ell_1^d$
with probability at least $1-\delta$.
\end{restatable}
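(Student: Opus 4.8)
The plan is to leverage the RCDA machinery already developed, but to \emph{upgrade} it to a strong coreset guarantee by exploiting the $C$-\instanceproperty{} condition. Recall that an $\epsilon$-approximation subset is automatically $O(\epsilon)$-RCDA by \Cref{lemma:mainlemma}, and that a uniform sample of size $O(\epsilon^{-2}\log(d/\delta))$ is an $\epsilon$-approximation with probability $1-\delta$ by \Cref{cor:approx}. The RCDA condition~\eqref{eq:rcda} already controls the additive error in the cost-\emph{difference} $\big[\avgcost(x,P)-\avgcost(\median,P)\big]-\big[\avgcost(x,Q)-\avgcost(\median,Q)\big]$ by $\epsilon\cdot\avgcost(x,P)$. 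To obtain a strong coreset I need to bound instead the error in $\avgcost(x,Q)-\avgcost(x,P)$ directly, relative to $\avgcost(x,P)$. The key observation is that the dispersedness condition lets me relate $\avgcost(\median,P)$ to $\avgcost(x,P)$: since the diameter is at most $C$ times the average pairwise distance, and the average pairwise distance is within a factor $2$ of $\avgopt(P)=\avgcost(\median,P)$ (by triangle inequality, $\tfrac12\avgcost(\median,P)\le \tfrac{1}{|P|^2}\sum_{x,y}\normlone{x-y}\le 2\avgcost(\median,P)$), every center $x$ has cost $\avgcost(x,P)$ that cannot be much larger than $O(C)\avgcost(\median,P)$ for points within the convex hull, but more to the point every relevant center has $\avgcost(\median,P)\le \avgcost(x,P)\le \avgcost(\median,P)+\mathrm{diam}(P)\le \avgcost(\median,P)+O(C)\avgcost(\median,P)$.

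The execution would proceed as follows. First I would take a uniform sample of size $O(C^2\epsilon^{-2}\log d)$, which by \Cref{cor:approx} is an $\epsilon' := \epsilon/(\Theta(C))$-approximation with high constant probability, hence an $O(\epsilon')=O(\epsilon/C)$-RCDA of $P$ by \Cref{lemma:mainlemma}. Second, as in the proof of \Cref{thm:main}, I would use Markov's inequality on $\avgcost(\median,Q)$ (whose expectation equals $\avgcost(\median,P)$) to guarantee $\avgcost(\median,Q)\le 4\avgcost(\median,P)$ with probability $\ge 3/4$, combined by a union bound. Third — the crux — I would convert the RCDA guarantee into the strong-coreset guarantee~\eqref{eq:strong}. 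Writing $\Delta(x):=\big[\avgcost(x,Q)-\avgcost(\median,Q)\big]-\big[\avgcost(x,P)-\avgcost(\median,P)\big]$, the RCDA bound gives $|\Delta(x)|\le (\epsilon/\Theta(C))\avgcost(x,P)$. Then
\begin{align*}
  \big|\avgcost(x,Q)-\avgcost(x,P)\big|
  &\le |\Delta(x)| + \big|\avgcost(\median,Q)-\avgcost(\median,P)\big| .
\end{align*}
The first term is $\le (\epsilon/\Theta(C))\avgcost(x,P)$. For the second term I bound $\big|\avgcost(\median,Q)-\avgcost(\median,P)\big|$ — again via the RCDA-type control applied at $x=\median$, where the bracketed differences vanish, so this reduces to showing the median's cost is itself preserved; this is where I expect to use dispersedness to argue $\avgcost(\median,P)\le O(C)\avgcost(x,P)$ for every $x$, so that an additive error of order $\epsilon\cdot\avgcost(\median,P)$ is absorbed into $\epsilon\cdot\avgcost(x,P)$.

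The main obstacle is the second term, $\big|\avgcost(\median,Q)-\avgcost(\median,P)\big|$: the RCDA condition gives no information at $x=\median$ (both brackets are differences that cancel), so I must control the \emph{absolute} preservation of the median cost separately. I expect to handle this by invoking the $\epsilon$-approximation guarantee more carefully — an $\epsilon$-approximation preserves each coordinate's empirical distribution function $\ord_P(i,r)$, and $\avgcost(\median,P)=\sum_i \EX_{p}|\,p[i]-\median[i]\,|$ can be written as an integral of $\ord_P(i,\cdot)$ over thresholds (via the layer-cake formula $\int |\ord_Q(i,r)-\ord_P(i,r)|\,dr$), so the error is at most $2\epsilon$ times the total coordinate-wise spread of $P$, which the dispersedness condition bounds by $O(C)\avgcost(\median,P)$. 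Combining, $\big|\avgcost(\median,Q)-\avgcost(\median,P)\big|\le O(\epsilon C)\avgcost(\median,P)\le O(\epsilon)\avgcost(x,P)$ after choosing the sample size proportional to $C^2$ to kill the extra $C$ factor. Summing the two terms yields $\big|\avgcost(x,Q)-\avgcost(x,P)\big|\le \epsilon\cdot\avgcost(x,P)$ for all $x$, which is exactly~\eqref{eq:strong}; scaling up by $|P|$ and noting the same holds with $Q$'s normalization completes the argument. The factor $C^2$ (rather than $C$) in the sample size is the price for converting the relative guarantee into an absolute one and for absorbing the dispersedness-induced $O(C)$ factor at the required precision.
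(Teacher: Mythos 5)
Your overall decomposition is exactly the paper's: split $\abs{\avgcost(x,Q)-\avgcost(x,P)}$ into the RCDA term plus $\abs{\avgcost(\median,Q)-\avgcost(\median,P)}$, and absorb the latter using $\avgcost(\median,P)\le\avgcost(x,P)$. However, your handling of the crux term --- which you correctly identify as the part RCDA cannot control --- contains a genuine gap. You bound $\abs{\avgcost(\median,Q)-\avgcost(\median,P)}$ by (roughly) $\epsilon'$ times the \emph{total coordinate-wise spread} $\sum_{i=1}^d \big(\max_{p\in P}p[i]-\min_{p\in P}p[i]\big)$ via the layer-cake formula, and then claim that dispersedness bounds this total spread by $O(C)\,\avgcost(\median,P)$. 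That claim is false: dispersedness controls the $\ell_1$ diameter $\max_{p,q}\normlone{p-q}$, not the sum of per-coordinate spreads, and these can differ by a factor of $\Theta(d)$. Concretely, take $P=\{e_1,\dots,e_d\}$, the standard basis vectors: the average pairwise distance and the diameter are both $\approx 2$, so $C=O(1)$, and $\avgopt(P)=1$ (the median is $\allzeros$), yet the total coordinate-wise spread is $d = d\cdot\avgopt(P)$. With your bound, the error on the median-cost term is only controlled at level $\epsilon'\cdot d\cdot\avgopt(P)$, which would force $\epsilon'=\Theta(\epsilon/d)$ and a sample of size $\Omega(d^2\epsilon^{-2})$, not $O(C^2\epsilon^{-2}\log d)$.

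The paper closes this step differently, and this is the one place where dispersedness is actually used: for each $p\in P$, the triangle inequality gives $\normlone{\median-p}\le \avgopt(P)+\mathrm{diam}(P)\le (2C+1)\avgopt(P)$, so the i.i.d.\ random variables $\normlone{\median-q}$, $q$ uniform in $P$, have mean $\avgopt(P)$ and range bounded by $(2C+1)\avgopt(P)$. Hoeffding's inequality then yields the two-sided bound $\abs{\avgcost(\median,Q)-\avgcost(\median,P)}\le\frac{\epsilon}{2}\avgcost(\median,P)$ with probability $1-2\exp\big(-\epsilon^2|Q|/(2(2C+1)^2)\big)$, which is where the $C^2$ in the sample size comes from; the RCDA part only needs parameter $\epsilon/2$ (your strengthening to $\epsilon/\Theta(C)$ is unnecessary). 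Note also that your Markov step giving $\avgcost(\median,Q)\le 4\avgcost(\median,P)$ is both redundant and insufficient here: a strong coreset needs tight two-sided preservation of the median's cost, which only a concentration argument of this type provides.
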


\section{Experiments}\label{section:experiments}
We demonstrate the empirical effectiveness of stable coresets for the median problem 
through experiments on real-world datasets across different metrics, 
comparing their performance against importance-sampling methods.
Our evaluation measures the relative error 
between the cost of a solution computed on the coreset 
and the cost of a solution computed by the same method on the original dataset, 
that is,
\begin{equation}\label{eq:relative_error}
    \widehat{E} = \frac{\cost(\hat{c}^Q,P)-\cost(\hat{c}^P,P)}{\cost(\hat{c}^P,P)} ,
\end{equation}
where $\hat{c}^Q$ is a center computed for the coreset $Q$, 
and $\hat{c}^P$ is a center computed for the original dataset $P$. 
This relative error is expressed as a percentage.
In our experiments, we examine points in $\Rd$ endowed with the $\ell_1$ metric 
as well as permutations under the Kendall-tau metric. 
For points in $\Rd$, we efficiently compute the optimal median by taking the coordinate-wise median, while for permutations we employ either heuristic methods or an Integer Linear Programming (ILP) approach.

\paragraph{Experimental setup.}
All experiments were conducted on a PC with Apple M1 and 16GB RAM running Python 3.9.6 on Darwin 22.6.0. For each experiment, we report the average results over 20 independent runs to ensure statistical significance. The datasets used in our experiments are detailed in Table~\ref{table:datasets}. The source code used to run the experiments is available at \url{https://github.com/amircarmel-lab/StableCoresets}.

\begin{table}[t]
\centering
\footnotesize
\setlength{\tabcolsep}{2pt}
\caption{\footnotesize{Specifications of datasets used in Section~\ref{section:experiments}.}}
\label{table:datasets}
\begin{tabular}{>{\raggedright\arraybackslash}p{6.5cm}rr>{\raggedright\arraybackslash}p{5cm}}
\toprule
\textbf{Dataset} & \textbf{Size $n$} & \textbf{Dim.\ $d$} & \textbf{Description} \\
\midrule
Yellow Taxi NYC (YT)~\cite{nyctlc2024} & 2.8M & 11 & New York City taxi trips in Jan.\ 2024 \\
Twitter~\cite{chan2018twitter} & 1.3M & 3 & Timestamp, latitude, longitude of tweets \\
Single-Cell Gene\newline Expression (SCGE)~\cite{10xgenomics2019pbmc} & 7,865 & 33,586 & Peripheral blood mononuclear cells gene expression \\
My Anime List (MAL)~\cite{valdivieso2020anime} & 234K & 50 & User rankings of anime titles \\
\bottomrule
\end{tabular}
\end{table}

\paragraph{Experiment 1: Comparison with importance sampling.}
We compare uniform sampling against the importance sampling-based coreset construction proposed by~\cite{DBLP:journals/ml/JiangKLZ24}. Their method works by iteratively computing sensitivity scores for each point and sampling progressively smaller subsets, with each iteration reducing the size by a logarithmic factor until achieving a dimension-independent coreset. In our implementation, we evaluated their approach using both one and two iterations of this reduction process.
The comparison focuses on the $1$-median problem in $\ell_1$ metric across three datasets from Table~\ref{table:datasets}; Yellow Taxi NYC, Twitter and Single-Cell Gene Expression.

Figure~\ref{figure:comparison} demonstrates that uniform sampling achieves comparable error rates to importance sampling across all datasets. This efficiency difference is fundamental: importance sampling requires examining the entire dataset, resulting in construction time linear in the dataset size, while uniform sampling requires only constant time per sample and is completely dataset-oblivious—it requires no inspection or processing of the dataset prior to sampling. In our experiments, importance sampling took approximately $82/114/512$ seconds for datasets YT/Twitter/SCGE respectively, whereas uniform sampling required only $0.0001$ seconds to sample 500 points.
The shaded regions represent one standard deviation, demonstrating comparable variability between the two methods.

\begin{figure}
    \centering
    \begin{subfigure}[b]{0.32\textwidth}
        \centering
        \caption{YT}
        \includegraphics[width=\textwidth]{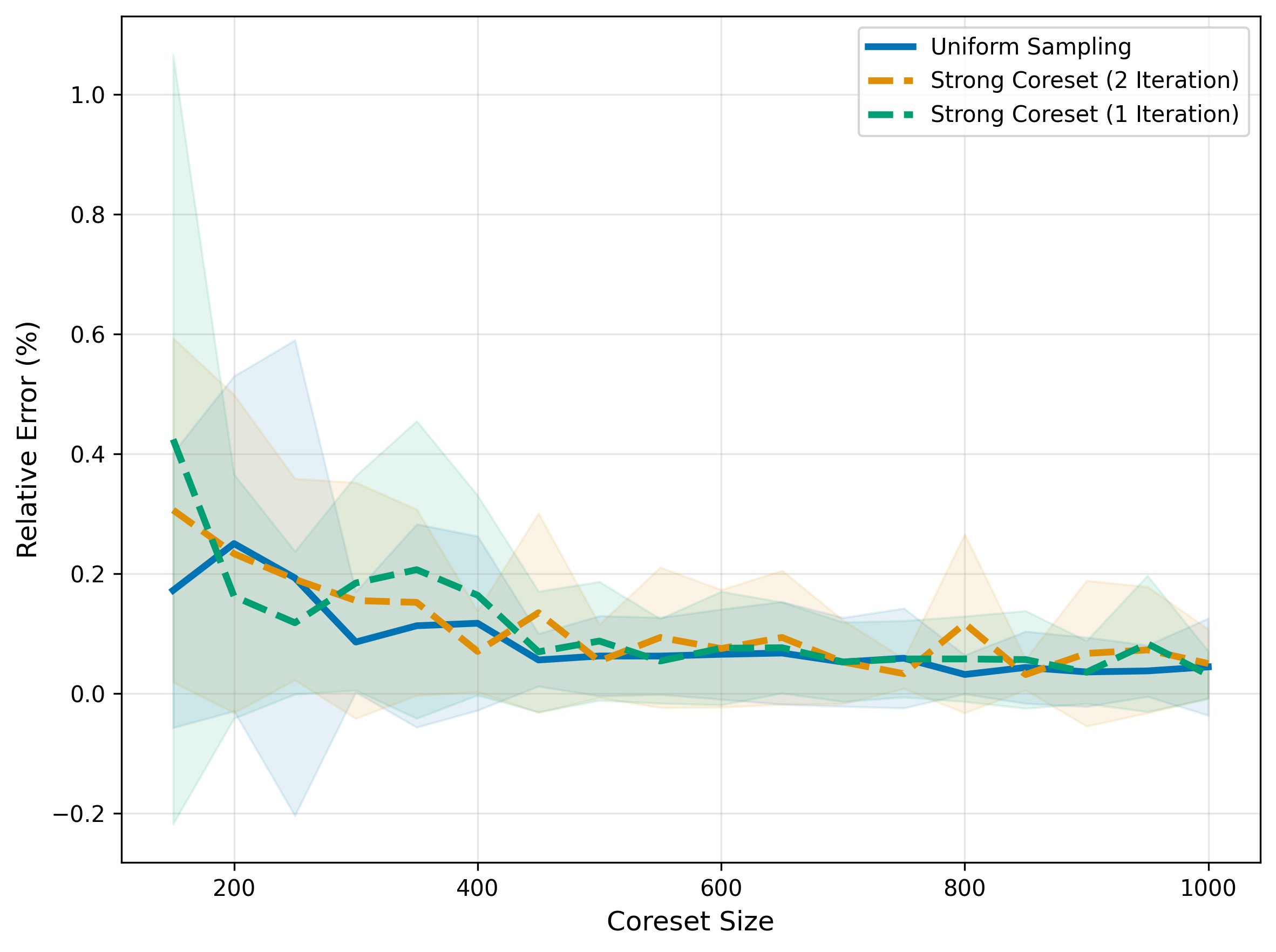}
        \label{fig:yt}
    \end{subfigure}
    \hfill
    \begin{subfigure}[b]{0.32\textwidth}
        \centering
        \caption{Twitter}
        \includegraphics[width=\textwidth]{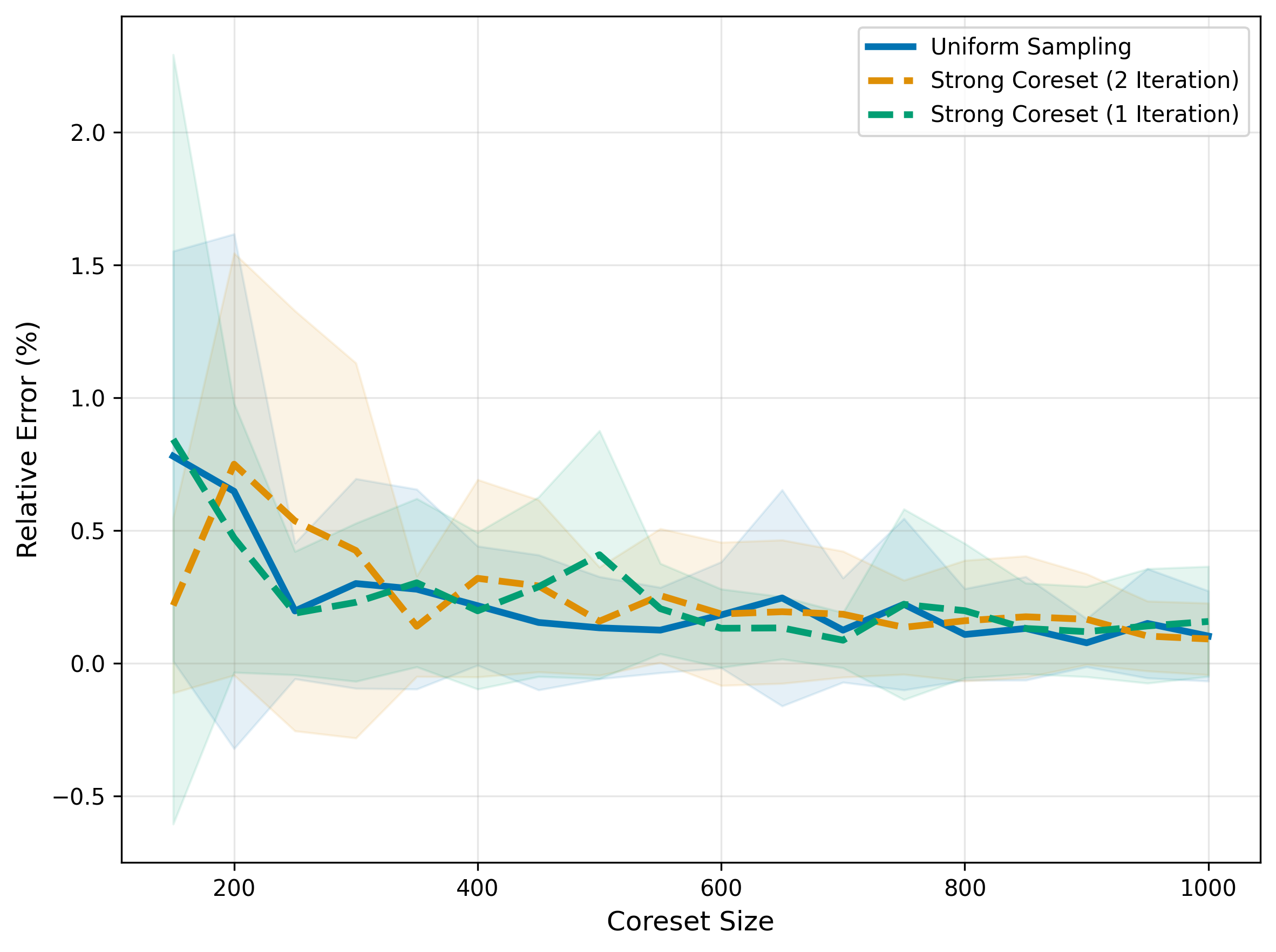}
        \label{fig:twitter}
    \end{subfigure}
    \hfill
    \begin{subfigure}[b]{0.32\textwidth}
        \centering
        \caption{SCGE}
        \includegraphics[width=\textwidth]{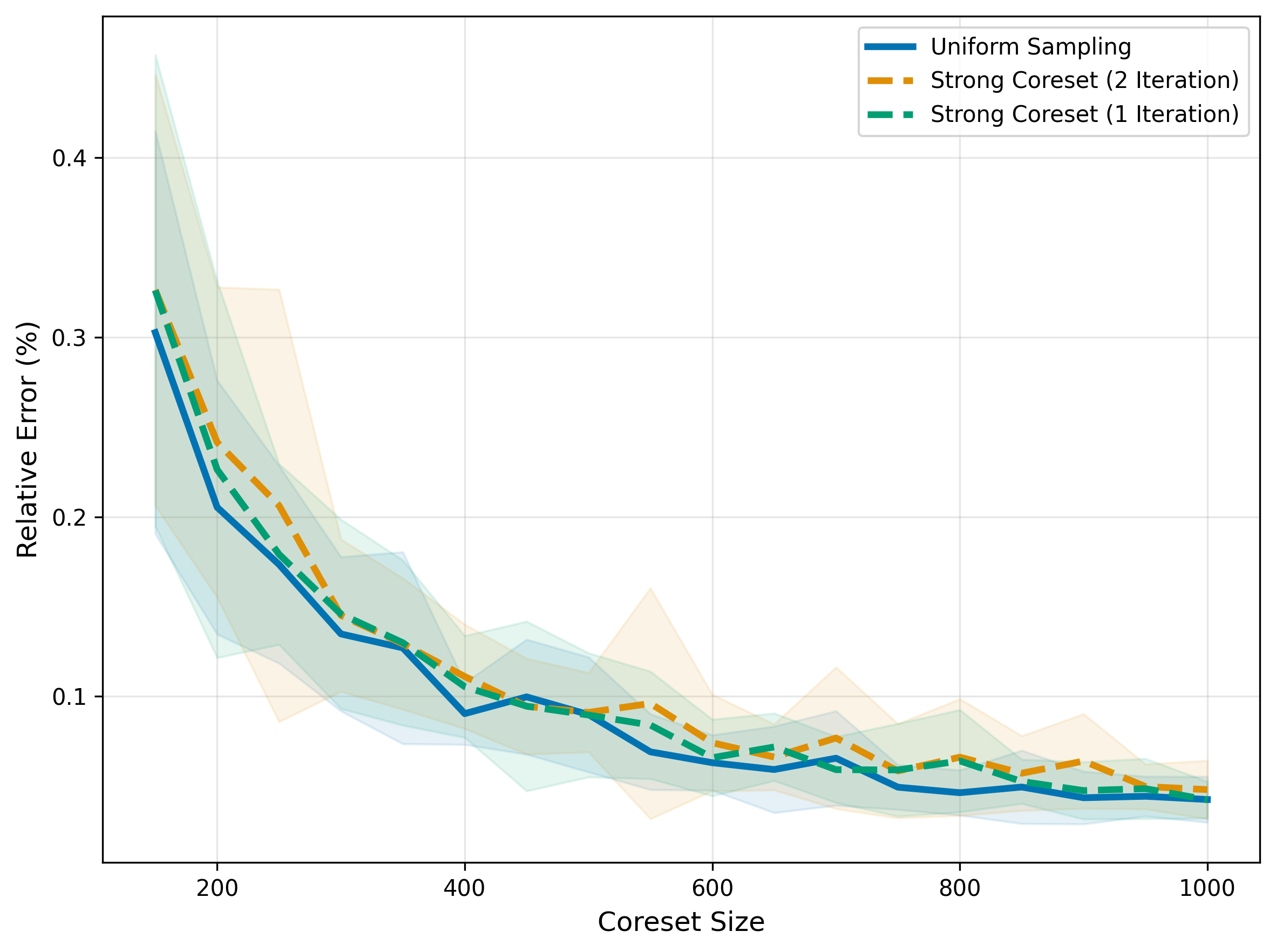}
        \label{fig:scge}
    \end{subfigure}
    \caption{\footnotesize{Tradeoff between coreset size and relative error, comparing importance sampling-based coresets with uniform sampling-based coresets across three datasets. Shaded regions represent one standard deviation.}}
    \label{figure:comparison}
\end{figure}

\paragraph{Experiment 2: Heuristic for Kendall-tau distance.}

Using the MAL dataset, we aggregated rankings of $234K$ users over $50$ anime titles. We implemented five widely used rank aggregation methods: three Markov Chain-based approaches (MC1, MC2, MC3), Borda's sorting algorithm and scaled footrule
aggregation (SFO)~\cite{dwork2001rank, kaur2017comparative}. 
Figure~\ref{figure:heuristics} illustrates the relative error of solutions computed on coresets of different sizes. It is important to note that these heuristics do not directly optimize the Kendall-tau cost. Consequently, solutions computed on the coresets can occasionally yield lower Kendall-tau costs than those from the original dataset, resulting in negative relative error values.
The results confirm that relatively small coresets achieve results comparable to those obtained on the original dataset, even when using heuristic approaches.

\begin{figure}[t]
    \centering
    \begin{subfigure}[t]{0.48\textwidth}
        \centering
        \includegraphics[width=\linewidth]{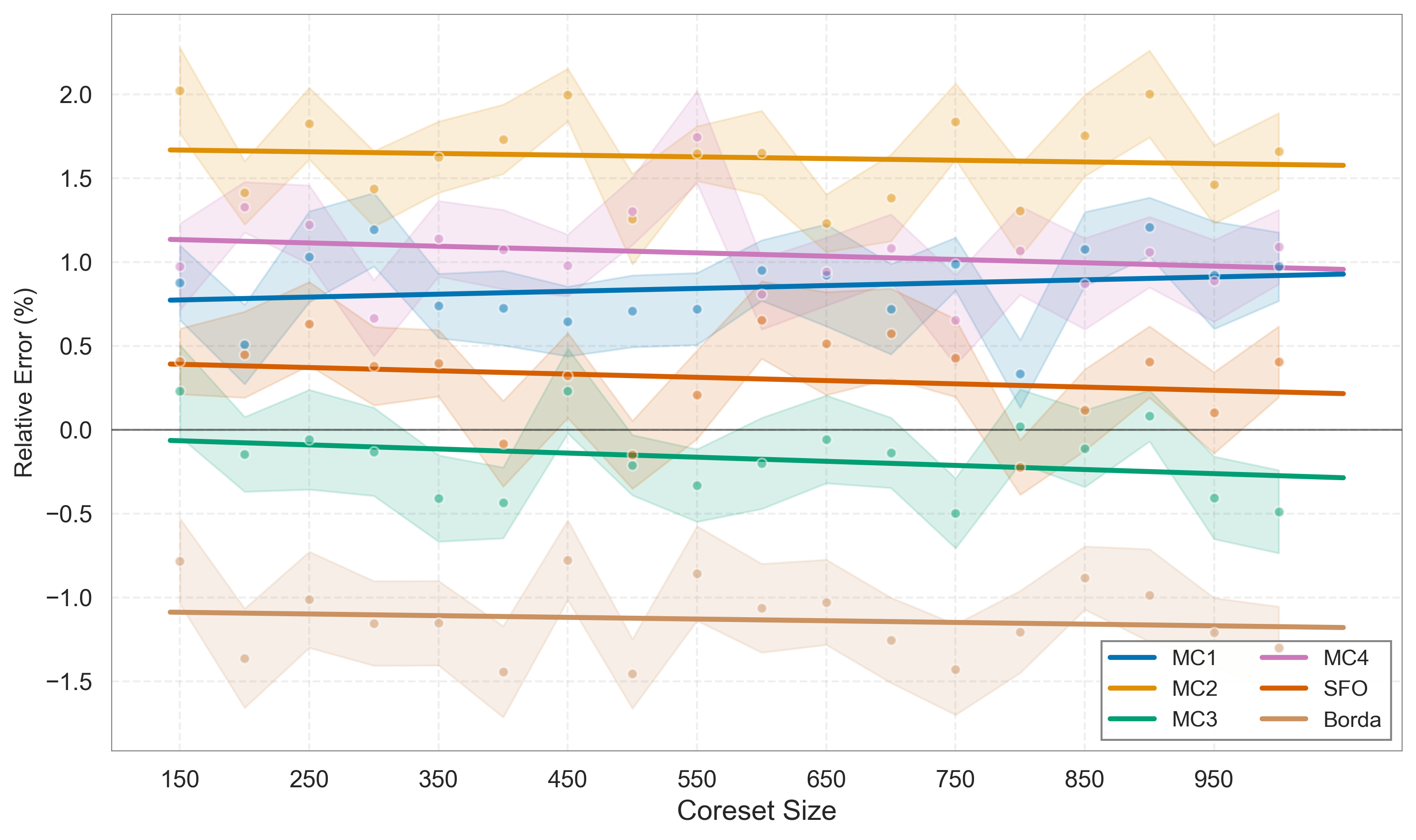}
        \caption{Comparison of ranking method performance with respect to coreset sizes. The plot shows relative error (\%) between coreset approximation and original dataset results. Regression lines demonstrate error trends as coreset size increases, with data points marking actual experimental measurements.}
        \label{figure:heuristics}
    \end{subfigure}
    \hfill
    \begin{subfigure}[t]{0.48\textwidth}
        \centering
        \includegraphics[width=\linewidth]{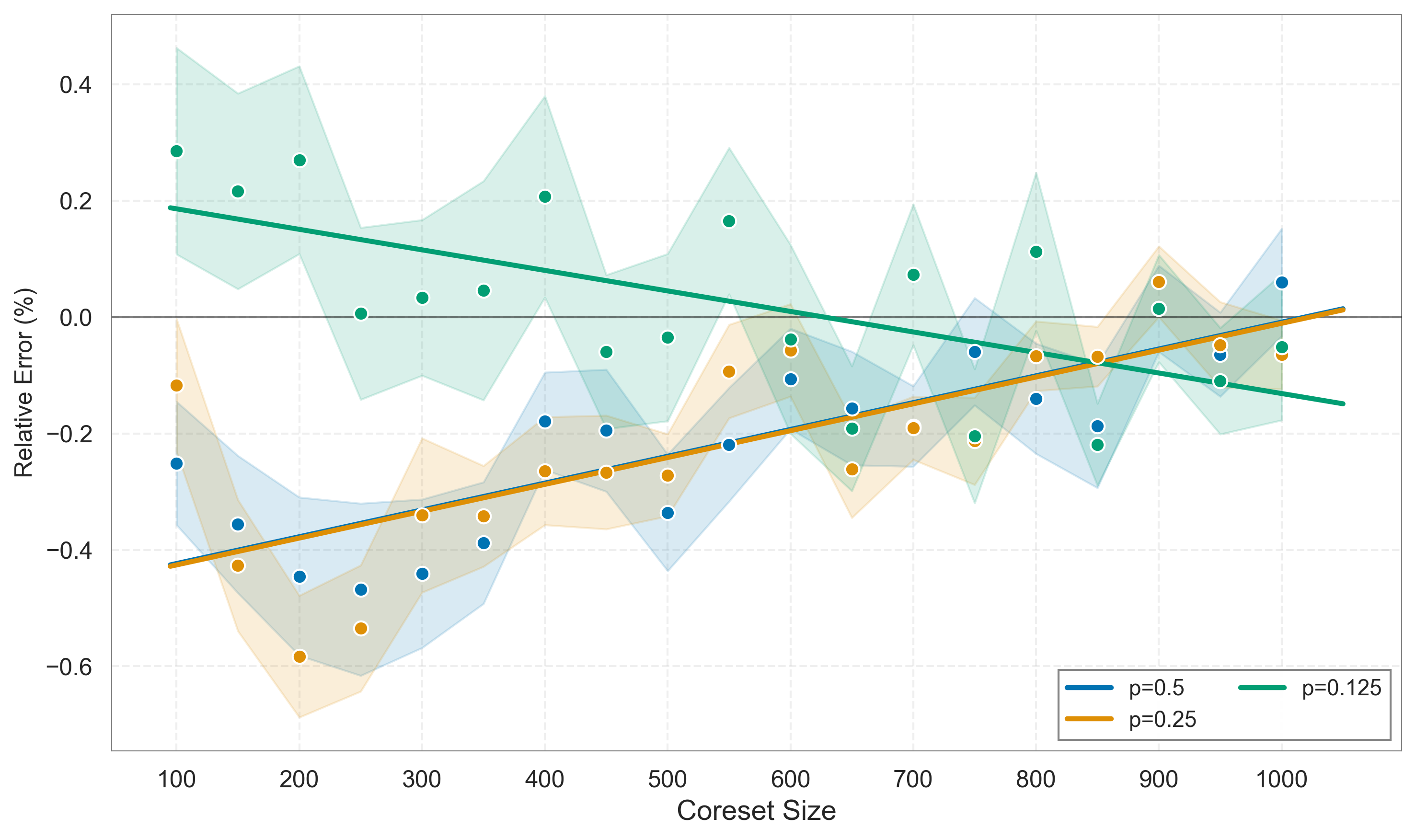}
        \caption{Impact of fairness constraints on coreset approximation error. The parameter $p$ represents the probability of sampling popular anime titles across two distinct groups. $p=0.5$ indicates balanced sampling between groups, while lower $p$ values indicate one group containing predominantly less popular anime. Shaded regions represent the standard error of the mean across multiple experimental runs.}
        \label{figure:fairness}
    \end{subfigure}

    \vspace{0.5em}
    \begin{subfigure}[t]{0.65\textwidth}
        \centering
        \includegraphics[width=\linewidth]{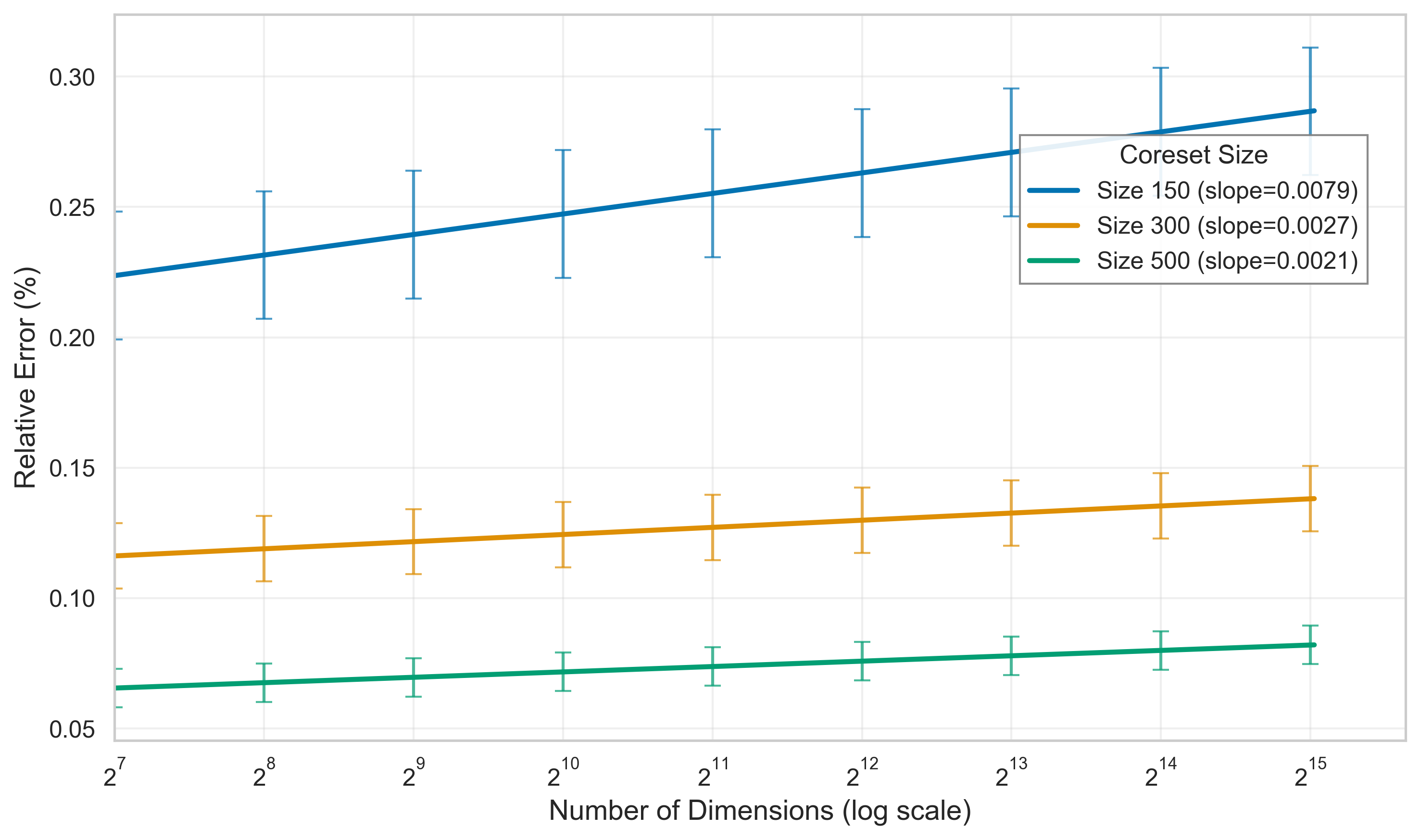}
        \caption{Relative error (\%) versus number of dimensions (log scale) for different coreset sizes. Regression lines demonstrate error trends as the number of dimensions increases. The slope is outlined for each regression line. Note that as the coreset size increases, the variance decreases, making the slope coefficient more statistically meaningful.}
        \label{figure:dimensions}
    \end{subfigure}

\end{figure}

\paragraph{Experiment 3: Fairness constraints.}
In this experiment, we sampled coresets of varying sizes from the dataset and then applied fairness constraints based on an arbitrary partitioning of items into two groups. We implemented the fairness-constrained integer linear programming algorithm by~\cite{DBLP:journals/pvldb/KuhlmanR20} on both the original dataset and the coresets. Due to algorithmic constraints, we restricted the dataset to 8,700 user rankings, with each user ranking 16 anime titles. This algorithm optimizes the Kendall-tau cost objective while enforcing the fairness measure as a linear constraint. Negative error values  occur because the ILP solver might find slightly better solutions on the coreset rather than on the original dataset.
Figure~\ref{figure:fairness} demonstrates that solutions obtained from even small-sized coresets closely approximate those from the original dataset. This confirms that uniform sampling produces stable coresets that effectively support constraint-based optimization, even when those constraints were not considered during the sampling process.

\paragraph{Experiment 4: Dimension dependency.}
While our theoretical analysis establishes stable coreset size bounds that depend on the dimension $d$ (Theorem~\ref{thm:main}), we conjecture that this dependency is unnecessary. This experiment specifically tests whether the uniform sampling coreset performance remains dimension-independent.

To test this, we utilized the high-dimensional Single-Cell Gene Expression dataset (7,865 samples across 33,586 dimensions). For a given dimension count $d$, we randomly selected $d$ dimensions from the input dataset and then uniformly sampled a coreset for various sizes (150, 300, 500). For each coreset, we measured the relative error as defined in Equation~\ref{eq:relative_error}. 

Figure~\ref{figure:dimensions} illustrates the relationship between dimension count and error rates. We note that the slight upward trend in error is likely attributable to sampling variance rather than dimensional dependency. This demonstrates that the relative error remains stable as dimension count increases, supporting our conjecture that the theoretical bounds could be tightened to yield a dimension-independent bound.

\newcommand{\etalchar}[1]{$^{#1}$}

\newpage
\appendix

\section{Proofs omitted from Section~\ref{section:preliminaries}}

\hierarchy*
\begin{proof}
To prove \cref{it:a},
let $c^P,c^Q$ denote optimal medians for $P,Q$ in $\Xspace$, respectively.
Consider $c\in \Xspace$ with $\cost(c, Q) \leq (1+\epsilon)\opt(Q)$. 
Then, $\cost(c, Q) \leq (1+\epsilon)\cost(c^Q,Q) \leq (1+\epsilon)\cost(c^P,Q)$.
Then, following Equation~\ref{eq:stable}, $\cost(c, P) \leq (1+\eta)\cost(c^P,P)$. 

To prove \cref{it:b},
let $c_1,c_2 \in \Xspace$ such that $\cost(c_1, \coreset)  \leq (1+\epsilon) \cost(c_2, Q)$.
Following the definition of strong $\epsilon$-coreset, it follows that:
\begin{align*}
\cost(c_1, P) &\leq \frac{\cost(c_1, Q)}{1-\epsilon} \leq \frac{(1+\epsilon)\cost(c_2, Q)}{1-\epsilon} \\ 
&\leq \frac{(1+\epsilon)(1+\epsilon)}{1-\epsilon}\cost(c_2, P) \leq (1+4\epsilon)\cost(c_2, P) .
\end{align*}
\end{proof}

\isometricfact*
\begin{proof}
For \cref{it:aa},
if $x_1 \neq x_2$, then $\text{dist}_1(x_1,x_2) \neq 0$, thus $f(x_1) \neq f(x_2)$.
For \cref{it:bb},
  $\cost(c,P) = \sum_{p \in P} \text{dist}_1 (c,p) = \sum_{p \in P} \text{dist}_2 (f(c),f(p)) = \cost(f(c),f(P))$.
\end{proof}

\isometricprop*
\begin{proof}
Let $c_1,c_2 \in \Xspace_1$ such that $\cost(c_1, Q) \leq (1+\epsilon)\cost(c_2, Q)$. 
By Fact~\ref{fact:isometric}, this implies 
$\cost(f(c_1), f(Q)) \leq (1+\epsilon)\cost(f(c_2), f(Q))$. 
Consequently, we have $\cost(f(c_1), f(P)) \leq (1+\eta) \cost(f(c_2), f(P))$. 
Applying Fact~\ref{fact:isometric} again yields $\cost(c_1, P) \leq (1+\eta)\cost(c_2, P)$.
\end{proof}

\subsection{Stable coresets in finite metric spaces}\label{appendix:finitespaces}

We remind the reader the following result by~\cite{indyk2001high}.

\begin{theorem}[Theorem~31 in \cite{indyk2001high}]\label{thm:indyk}
Let $\epsilon \in (0,1)$, and $Q$ be a random sample from $P$. 
For arbitrary pair of points $a,b \in \Xspace$, if $\cost(a,P) > (1+\epsilon)\cost(b,P)$, then
\[
 \Pr[\cost(a,Q) > \cost(b,Q)] \geq 1-e^{-\epsilon^2|Q|/64} .
\]
\end{theorem}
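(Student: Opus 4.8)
The plan is to express the event $\cost(a,Q) \le \cost(b,Q)$ as a downward deviation of a sum of bounded random variables and then apply a Hoeffding-type bound. Writing $m := |Q|$ and setting $Y := \dist(a,q) - \dist(b,q)$ for a uniformly random $q \in P$, the triangle inequality gives $|Y| \le \dist(a,b) =: D$, so each summand lives in $[-D,D]$. Since $\cost(a,Q) - \cost(b,Q) = \sum_{q \in Q} \big(\dist(a,q) - \dist(b,q)\big)$, the bad event is exactly $\frac{1}{m}\sum_j Y_j \le 0$, where the $Y_j$ are i.i.d.\ copies of $Y$ with $\EX[Y] = \avgcost(a,P) - \avgcost(b,P) =: \mu$. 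The hypothesis $\cost(a,P) > (1+\epsilon)\cost(b,P)$ translates to $\mu > \epsilon\,\avgcost(b,P) \ge 0$, so the bad event is a deviation of the empirical mean below its strictly positive expectation $\mu$. Note also that $a \neq b$ (their costs differ), so $D>0$ and normalizing by $D$ is legitimate.

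To make Hoeffding effective I would first show that the signal-to-range ratio $\mu/D$ is bounded below by $\Omega(\epsilon)$. The key geometric fact is the \emph{averaged triangle inequality}: averaging $\dist(a,b) \le \dist(a,p) + \dist(b,p)$ over all $p \in P$ yields $D \le \avgcost(a,P) + \avgcost(b,P)$. Abbreviating $\alpha := \avgcost(a,P)$ and $\beta := \avgcost(b,P)$, this reads $D \le \alpha + \beta = \mu + 2\beta$. Combining with $\beta < \mu/\epsilon$ (from $\mu > \epsilon\beta$) gives $D < \mu(1 + 2/\epsilon)$, hence $\mu/D > \epsilon/(\epsilon+2)$, uniformly over all configurations.

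Finally, I would apply Hoeffding's inequality to the variables $Z_j := Y_j/D \in [-1,1]$ with empirical mean $\bar Z := \frac{1}{m}\sum_{j=1}^m Z_j$: since $\EX[\bar Z] = \mu/D > 0$,
\[
  \Pr[\cost(a,Q) \le \cost(b,Q)]
  = \Pr[\bar Z \le 0]
  \le \exp\!\Big(- \tfrac{m (\mu/D)^2}{2}\Big)
  < \exp\!\Big(- \tfrac{m\epsilon^2}{2(\epsilon+2)^2}\Big).
\]
For $\epsilon \in (0,1)$ we have $(\epsilon+2)^2 \le 9 < 32$, so the exponent is at most $-m\epsilon^2/64$, giving the claimed bound (in fact with room to spare); sampling without replacement only sharpens the concentration. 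The step I expect to be the crux is controlling the range of the summands: a priori $D = \dist(a,b)$ could be enormous relative to the cost gap $\mu$, which would render Hoeffding vacuous. The averaged triangle inequality is precisely what rules this out, forcing a large $D$ to be accompanied by a correspondingly large mean gap and thereby keeping $\mu/D = \Omega(\epsilon)$ in every case.
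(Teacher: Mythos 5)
Your proof is correct, and it is essentially the argument behind the cited result: the paper itself imports this theorem from Indyk's thesis without reproving it, and Indyk's proof follows the same template you use---bound each summand's range by $\dist(a,b)$ via the triangle inequality, lower-bound the mean gap $\mu$ against that range by averaging the triangle inequality over $P$ (which, with the hypothesis $\cost(a,P)>(1+\epsilon)\cost(b,P)$, gives $\mu/\dist(a,b)\geq\epsilon/(\epsilon+2)$), and finish with a Chernoff--Hoeffding bound, where the slack between your exponent $\epsilon^2|Q|/18$ and the stated $\epsilon^2|Q|/64$ absorbs the constants. Your handling of the edge cases ($a\neq b$ so $D>0$; ties counted in the bad event; sampling without replacement only helping) is also sound.
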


A folklore analysis based on Thoerem~\ref{thm:indyk} leads to the following result.

\begin{corollary}
Fix a finite metric space $\Xspace$ and let $P\subset\Xspace$. 
Then, for every $\epsilon,\delta \in (0,1)$, 
a uniform sample $Q \subseteq P$ of size 
$|Q| \geq 64\epsilon^{-2}(2\ln |\Xspace| + \ln (1/\delta))$ 
is a stable $(0,\epsilon)$-coreset with probability at least $1-\delta$.
\end{corollary}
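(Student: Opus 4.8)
The plan is to derive the stable-coreset guarantee from Indyk's pairwise comparison bound (\Cref{thm:indyk}) by a union bound over all ordered pairs of centers. First I would unpack the definition of a stable $(0,\epsilon)$-coreset: it requires that for all $c_1,c_2\in\Xspace$, the implication $\cost(c_1,\coreset)\le\cost(c_2,\coreset)\Rightarrow\cost(c_1,P)\le(1+\epsilon)\cost(c_2,P)$ holds. The coreset fails exactly when some ordered pair $(c_1,c_2)$ witnesses the negation, namely $\cost(c_1,\coreset)\le\cost(c_2,\coreset)$ while $\cost(c_1,P)>(1+\epsilon)\cost(c_2,P)$. Consequently, I only need to control pairs satisfying $\cost(c_1,P)>(1+\epsilon)\cost(c_2,P)$; for every other pair the conclusion holds deterministically, regardless of the sample, so such pairs contribute nothing to the failure probability.

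For each such dangerous ordered pair, I would invoke \Cref{thm:indyk} with $a=c_1$ and $b=c_2$, which yields $\Pr[\cost(c_1,\coreset)>\cost(c_2,\coreset)]\ge 1-e^{-\epsilon^2|\coreset|/64}$. Taking complements, the probability that this pair becomes a bad witness (i.e.\ that $\cost(c_1,\coreset)\le\cost(c_2,\coreset)$) is at most $e^{-\epsilon^2|\coreset|/64}$. Since a finite metric space has at most $|\Xspace|^2$ ordered pairs, a union bound over all of them bounds the total failure probability by $|\Xspace|^2\cdot e^{-\epsilon^2|\coreset|/64}$.

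Finally I would choose the sample size so that this quantity is at most $\delta$. The requirement $|\Xspace|^2 e^{-\epsilon^2|\coreset|/64}\le\delta$ is equivalent to $\epsilon^2|\coreset|/64\ge 2\ln|\Xspace|+\ln(1/\delta)$, that is, $|\coreset|\ge 64\epsilon^{-2}\bigl(2\ln|\Xspace|+\ln(1/\delta)\bigr)$, precisely the stated bound. The argument is essentially routine once the strict/non-strict inequality directions of Indyk's bound are matched to the contrapositive of the stable-coreset condition; the only point requiring genuine care is restricting attention to pairs with $\cost(c_1,P)>(1+\epsilon)\cost(c_2,P)$, which both keeps the union-bound count at $|\Xspace|^2$ and confirms that the randomness is needed only for the pairs that could actually violate~\eqref{eq:stable}.
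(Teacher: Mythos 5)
Your proposal is correct and matches the paper's own proof essentially step for step: both restrict attention to the ``dangerous'' pairs with $\cost(c_1,P) > (1+\epsilon)\cost(c_2,P)$, apply \Cref{thm:indyk} to each such pair, and take a union bound over at most $|\Xspace|^2$ pairs to get failure probability $|\Xspace|^2 e^{-\epsilon^2|Q|/64} \leq \delta$ for the stated sample size. Your handling is in fact slightly cleaner, since you work with ordered pairs throughout and explicitly note that non-dangerous pairs satisfy the implication deterministically, a point the paper leaves implicit.
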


\begin{proof}
Let $S$ contain all unordered pairs $a,b \in \Xspace$ 
such that $\cost(a,P)> (1+\epsilon)\cost(b,P)$. 
For each such pair, let $A_{a,b}$ be the event that $\cost(a,Q) \leq \cost(b,Q)$,
Then by applying Theorem~\ref{thm:indyk} and a union bound,
\begin{align*}
\Pr[\text{ $Q$ is not a stable $(\epsilon,0)$-coreset } ]
\leq \Pr[ \cup_{a,b\in S} A_{a,b}]
\leq \sum_{a,b\in S} \Pr[ A_{a,b}]
\leq |\Xspace|^2 e^{-\epsilon^2 \frac{|Q|}{64}} 
\leq \delta .
\end{align*}
\end{proof}

\section{Proofs omitted from Section~\ref{section:framework}}
\label{app:framework} 

\framework*
\begin{proof}
Let $x,y \in \Xspace$ such that $\avgcost(x,P) > (1+4\epsilon)\avgcost(y,P)$.
Using~\eqref{eq:rcda}, we have
\begin{itemize} \compactify
\item
  $\avgcost(x,Q) \geq (1-\epsilon)\avgcost(x,P)-\avgcost(\median,P)+\avgcost(\median,Q)$.
\item
  $\avgcost(y,Q) \leq (1+\epsilon)\avgcost(y,P)-\avgcost(\median,P)+\avgcost(\median,Q)$.
\end{itemize}
We can then derive
\begin{align}
  \avgcost(x,Q) &\geq (1-\epsilon)\avgcost(x,P)-\avgcost(\median,P)+\avgcost(\median,Q) \nonumber
  \\
                &> (1-\epsilon)(1+4\epsilon)\avgcost(y,P)-\avgcost(\median,P)+\avgcost(\median,Q) \nonumber
  \\
                &= (1+\epsilon+2\epsilon(1-2\epsilon))\avgcost(y,P)-\avgcost(\median,P)+\avgcost(\median,Q) \nonumber
  \\
&\geq \avgcost(y,Q) + 2\epsilon(1-2\epsilon)\avgcost(y,P) . \label{eq:stableproof}
\end{align}

Using~\eqref{eq:rcda} again, we can write
\[
(1+\epsilon) \avgcost(y,P) \geq \avgcost(y,Q) + \avgcost(\median,P) - \avgcost(\median,Q) \geq \frac{1}{c}\cost (y,Q) ,
\]
where the last inequality follows from the fact that $\avgcost(y,Q) \geq \avgcost (\median,Q)$ and $\avgcost (\median,Q) \leq c \avgcost (\median,P)$.
Consequently, 
\begin{equation*}
     \avgcost(y,P) \geq  \frac{1}{c(1+\epsilon)}\avgcost (y,Q).
\end{equation*}

Plugging this into~\eqref{eq:stableproof}
and using our assumption that $\epsilon < \frac{1}{5}$,
we conclude that 
\[
\avgcost(x,Q) \geq \Bigg(1+ \frac{2\epsilon(1-2\epsilon)}{c(1+\epsilon)}\Bigg) \avgcost (y,Q) \geq (1+\frac{\epsilon}{c})\avgcost (y,Q) .
\]
\end{proof}

\section{Proofs omitted from Section~\ref{section:stable}}
\label{app:stable} 

\vcdim*

\begin{proof}
For the upper bound, let $\VCdim(\mathcal{T}) = m$,
then there are $m$ points $x_0,\ldots,x_{m-1} \in \Rd$
such that $V = \{(\tau (x_0), \ldots, \tau (x_{m-1})) :\ \tau \in \mathcal{T}\}$ is of size $2^m$. 
We restrict attention to tuples $(\tau(x_0), \ldots, \tau(x_{m-1}))\in V$
whose coordinates sum to $\lfloor \frac{m}{2} \rfloor$, 
denoted $V_{m/2} := \set{v \in V :\ \sum_{i=0}^{m-1} v[i] = \lfloor \frac{m}{2} \rfloor}$. 
For each $v \in V_{m/2}$, we select a threshold function $\tau \in \mathcal{T}$ that realizes this vector, i.e., such that $v=(\tau(x_0),\ldots, \tau(x_{m-1}))$. We denote this function by $\tau_{i_v,r_v}$, for some $i_v \in [d]$ and $r_v \in \mathbb{R}$.
We claim that if $v_1\neq v_2 \in V_{m/2}$ then $i_{v_1} \neq i_{v_2}$. 
To see this, assume without loss of generality that $r_{v_1} \leq r_{v_2}$. 
If $i_{v_1} = i_{v_2}$, then for every $x \in \RR$, $\tau_{i_{v_2},r_{v_2}}(x)=1$ implies $\tau_{i_{v_1},r_{v_1}}(x)=1$. Since the coordinate sums of $v_1$ and $v_2$ are equal, we must have $v_1 = v_2$.
Therefore, by the pigeonhole principle, $|V_{m/2}| \leq d$.
However, using the known bound $|V_{m/2}| = \binom{m}{\lfloor \frac{m}{2} \rfloor} \geq \frac{1}{\sqrt{2m}}2^m$, we get that $m \leq 2 \log d$.

For the lower bound, we construct a shattering set of points $x_0,...,x_{m-1} \in \Rd$ where $m=\log d$ (assuming for simplicity that $d$ is a power of 2).
We identify each coordinate with a binary vector $v\in \set{0,1}^m$, corresponding to its binary representation, and define $x_i[v] = 1-v[i]$.
To show that ${x_0,...,x_{m-1}}$ is a shattering set, consider the functions $(\tau_{v,\frac{1}{2}})_{v\in \set{0,1}^m} \subseteq \mathcal{T}$. For every $v\in \set{0,1}^m$, we have:
\[
(\tau_{v,\frac{1}{2}}(x_0),\ldots,\tau_{v,\frac{1}{2}}(x_{m-1}))=(\mathbbm{1}_{x_0[v] \leq \frac{1}{2}},\ldots,\mathbbm{1}_{x_{m-1}[v] \leq \frac{1}{2}})=(\mathbbm{1}_{v[0] \geq \frac{1}{2}},\ldots,\mathbbm{1}_{v[m-1] \geq \frac{1}{2}})=v
\]
\end{proof}

Let $P\subset \Rd$ of size $n=|P|$, and assume for simplicity that $n$ is odd. Without loss of generality, we may assume that $P$ contains no repeated points, as duplicate points can be perturbed by an infinitesimal amount. Denote by $\median\in\Rd$ a median of $P$. 
We will also need the following well-known fact.

\begin{fact}\label{fact:median}
Let $P \subset \Rd$ be finite.
Then, $\median \in\Rd$ is a $1$-median for $P$
if and only if for every $i\in[d]$, 
$\median[i]$ is a median value of the $i$-th coordinate of all points in $P$.
\end{fact}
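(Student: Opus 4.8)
The plan is to exploit the coordinate-wise separability of the $\ell_1$ norm, which reduces the $d$-dimensional median problem to $d$ independent one-dimensional problems. First I would write, for any center $c\in\Rd$,
\[
  \cost(c,P)=\sum_{p\in P}\normlone{c-p}=\sum_{p\in P}\sum_{i=1}^{d}\abs{c[i]-p[i]}=\sum_{i=1}^{d} h_i(c[i]),
  \qquad\text{where } h_i(t):=\sum_{p\in P}\abs{t-p[i]}.
\]
The crucial observation is that the variable $c[i]$ appears only in the $i$-th summand, so the coordinates are fully decoupled. Consequently $\median$ minimizes $\cost(\cdot,P)$ over $\Rd$ if and only if, for every $i\in[d]$, the scalar $\median[i]$ minimizes $h_i$ over $\RR$, since a sum of independent univariate functions is minimized exactly when each summand is. This turns the statement into a purely one-dimensional claim, to be verified coordinate by coordinate.

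Next I would establish the one-dimensional characterization: for reals $a_1,\dots,a_n$, a point $t^\star$ minimizes $h(t)=\sum_{j=1}^{n}\abs{t-a_j}$ if and only if $t^\star$ is a median of the $a_j$. The function $h$ is convex and piecewise linear, so it suffices to examine its slopes. On any open interval between consecutive sorted values, $h$ is differentiable with
\[
  h'(t)=\#\{j : a_j<t\}-\#\{j : a_j>t\},
\]
which is nondecreasing in $t$ by convexity. Hence the set of minimizers is exactly the set of $t$ at which this slope switches from nonpositive to nonnegative, i.e.\ where the number of $a_j$ strictly below $t$ balances the number strictly above---which is precisely the median condition. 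Since the paper assumes $n=\abs{P}$ is odd and $P$ has no repeated points, each $h_i$ has a unique minimizer, namely the middle order statistic of the $i$-th coordinate, so the median value is well defined and unique.

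Finally I would combine the two ingredients: applying the decoupling to the optimality of $\median$ and then the one-dimensional characterization to each $h_i$ yields both directions of the equivalence at once, since every step is itself an ``if and only if.'' I do not expect a genuine obstacle here, as this is a classical fact; the only point requiring care is the handling of ties in the median characterization---the slope $h'$ can vanish on a whole interval when $n$ is even, making the minimizer a closed interval rather than a point---but the standing assumptions ($n$ odd and distinct points) eliminate this case and guarantee that each $h_i$ has a single minimizer.
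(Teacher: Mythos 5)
Your proof is correct, and it is the standard argument: the coordinate-wise decoupling $\cost(c,P)=\sum_i h_i(c[i])$ of the $\ell_1$ cost, followed by the one-dimensional characterization of minimizers of $t\mapsto\sum_j\abs{t-a_j}$ via convexity and slope (subgradient) sign change, which is exactly the median condition. Note that the paper states Fact~\ref{fact:median} as well known and provides no proof at all, so your write-up simply supplies the omitted folklore argument; the only cosmetic point is that your formula for $h'$ is valid off the data points $a_j$ (at the $a_j$ one compares left and right derivatives, which your ``switches from nonpositive to nonnegative'' phrasing already captures), and that the oddness/distinctness assumptions you invoke at the end are needed only for uniqueness of the minimizer, not for the equivalence itself, which holds for arbitrary finite $P$ exactly as the fact is stated.
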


\mainlemma*

\begin{proof}
We first consider $d=1$, and without loss of generality assume $x \leq \median$. 
To evaluate the difference in average costs, we break the calculation into the three regions $L$, $M$, and $U$ (similar strategy was used in~\cite{danos21}):
$L := \set{p \in P :\ p \leq x}$, 
$M := \set{p \in P :\ x < p \leq \median}$, and 
$U := \set{p \in P :\ \median < p}$.
In each region, we can simplify the expression $|x-p|-|\median-p|$ by expressing the absolute values explicitly, which allows us to evaluate $\avgcost(x,P) - \avgcost(\median,P)$. Notably, in regions $L$ and $U$, the value of $|x-p|-|\median-p|$ depends only on $x$ and $\median$, not on the specific value of $p$.
Observe that 
$\abs{L}=\abs{P}\ord_P(x)$, 
$\abs{M}=\abs{P}(\ord_P(\median)-\ord_P(x))$ and 
$\abs{U}=\abs{P}(1-\ord_P(\median))=\frac{|P|}{2}$.
We can write
\begin{align*}
   \avgcost(x,P) - \avgcost(\median,P) 
   &= \frac{1}{|P|}\Big(\sum_{p \in L} |x-p| + \sum_{p \in M} |x-p| + \sum_{p \in U} |x-p|\Big) - \avgcost(\median,P)\\
   &= \frac{1}{|P|}\Big(\sum_{p \in L} \big(|\median-p| - |x-\median|\big) + \sum_{p \in M} \big(|x-\median| - |\median-p|\big) \\
   &\quad + \sum_{p \in U} \big(|\median-p| + |x-\median|\big)\Big) - \avgcost(\median,P) \\
   &= \frac{1}{|P|}\Big(-\sum_{p \in L} |x-\median| + \sum_{p \in M} |x-\median| + \sum_{p \in U} |x-\median| \\
   &\quad - 2\sum_{p \in M} |\median-p|\Big) \\
   &= |x-\median|(1-2\ord_P(x)) - \frac{2}{|P|}\sum_{p \in M} |\median-p| .
\end{align*}
Similarly for $Q$ (the fact that $\median$ is a median of $P$ is not utilized in the argument above),
\begin{align*}
    \avgcost(x,Q) - \avgcost(\median,Q)= |x-\median|(1 - 2\ord_Q(x)) - \frac{2}{|Q|}\sum_{q \in M \cap Q} |\median-q| ,
\end{align*}
and thus
\begin{equation}\label{eq:main}
\begin{split}
  &\big(\avgcost(x,P) - \avgcost(\median,P)\big) - \big(\avgcost(x,Q) - \avgcost(\median,Q)\big) \\ 
  &= 2|x-\median|\big(\ord_Q(x) - \ord_P(x) \big)
  - \frac{2}{|P|}\sum_{p \in M} |\median-p| + \frac{2}{|Q|}\sum_{q \in M \cap Q} |\median-q| .
\end{split}
\end{equation}

We will now bound each on of the terms in~\eqref{eq:main}. 
The first term is bounded utilizing the $\epsilon$-approximation property of $Q$, 
\begin{equation}\label{eq:first_term}
-2\epsilon|\median-x| \leq 2|x-\median|\big(\ord_Q(x) - \ord_P(x) \big) \leq 2\epsilon|\median-x| .
\end{equation}

We now bound the term $\frac{1}{|Q|}\sum_{q \in M \cap Q}|\median-q| - \frac{1}{|P|}\sum_{p \in M}|\median-p|$. 
Partition $M$ to intervals $I_0,I_1,...,I_t$, 
such that each interval contains exactly $2\epsilon |P|$ points,
except for the last one which might be smaller. We denote $I_i = (p_i, p_{i+1}]$, with $p_0 = x$ and $p_{t+1} = \median$ and $a_i = \frac{1}{|Q|}|I_i \cap Q| - \frac{1}{|P|}|I_i \cap P|$ for $i=0,...,t$.

Here the idea is to partition $M$ into intervals containing a controlled number of points, leveraging the fact that $Q$ approximates the proportion of points in each interval. By bounding each interval's contribution using the interval endpoints and the relative difference in point counts between $P$ and $Q$, then applying telescoping sums across all intervals, we can establish tight bounds on how much the average distances in $Q$ can deviate from those in $P$ across the entire region $M$.

Each interval contains at least $0$ points from $Q$ and at most $4\epsilon |Q|$ points from $Q$. Consequently, $-2\epsilon \leq \sum_{i=j_1}^{j_2} a_i \leq 2\epsilon$, for every $j_1 \leq j_2$.
We can write 

\begin{equation*}
    \frac{1}{|Q|}\sum_{q \in M \cap Q}|\median-q| - \frac{1}{|P|}\sum_{p \in M}|\median-p| = 
    \frac{1}{|Q|}\sum_{i=0}^t \sum_{q\in I_i \cap Q} |\median-q| - \frac{1}{|P|}\sum_{i=0}^t \sum_{p\in I_i \cap P}|\median-p| . 
\end{equation*}

For the upper bound, 

\begin{align*}
   \frac{1}{|Q|} & \sum_{i=0}^t \sum_{q\in I_i \cap Q} |\median-q| - \frac{1}{|P|}\sum_{i=0}^t \sum_{p\in I_i \cap P}|\median-p| \\ 
   &\leq \frac{1}{|Q|}\sum_{i=0}^t \sum_{q\in I_i \cap Q} |\median-p_i| - \frac{1}{|P|}\sum_{i=0}^t \sum_{q\in I_i \cap P}|\median-p_{i+1}| \\
   &\leq \frac{1}{|Q|}\sum_{i=0}^t |I_i \cap Q| |\median-p_i| - \frac{1}{|P|}\sum_{i=0}^t |I_i \cap P| |\median-p_{i+1}| \\
   &\leq \sum_{i=0}^t \big(a_i + \frac{1}{|P|}|I_i \cap P|\big) |\median-p_i| - \frac{1}{|P|}\sum_{i=0}^t |I_i \cap P| |\median-p_{i+1}| \\
   &\leq \sum_{i=0}^t a_i |\median-p_i| + \sum_{i=0}^t \frac{|I_i \cap P|}{|P|}(|\median-p_i|-|\median-p_{i+1}|) \\
   &\leq \sum_{i=0}^t a_i |\median-p_i| + 2\epsilon \sum_{i=0}^t (p_{i+1}-p_i)  \leq 4\epsilon |\median-x| , 
\end{align*}

Similarly for the lower bound, 
\begin{align*}
   \frac{1}{|Q|} & \sum_{i=0}^t \sum_{q\in I_i \cap Q} |\median-q| - \frac{1}{|P|}\sum_{i=0}^t \sum_{q\in I_i \cap P}|\median-p| \\
   &\geq \frac{1}{|Q|}\sum_{i=0}^t \sum_{q\in I_i \cap Q} |\median-p_{i+1}| - \frac{1}{|P|}\sum_{i=0}^t \sum_{q\in I_i \cap P}|\median-p_{i}| \\
   &\geq \sum_{i=0}^t a_i |\median-p_{i+1}| + 2\epsilon \sum_{i=0}^t (p_i-p_{i+1}) \\
   &\geq -4\epsilon |\median-x| .\\
\end{align*}
We get that
\begin{equation}\label{eq:second_term}
    -8\epsilon |\median-x| \leq  - \frac{2}{|P|}\sum_{p \in M} |\median-p| + \frac{2}{|Q|}\sum_{q \in M \cap Q} |\median-q| \leq 8\epsilon |\median-x| .
\end{equation}
Thus, by combining both Equation~\ref{eq:first_term} and~\ref{eq:second_term}, we obtain 
\begin{equation*}
    -10\epsilon |\median-x| \leq (\avgcost(x,P) - \avgcost(\median,P)) - (\avgcost(x,Q) - \avgcost(\median,Q)) \leq 10 \epsilon |\median-x| .
\end{equation*}

The case $d=1$ follows because
$\avgcost(x,P) \geq \frac{1}{|P|}\sum_{p \in U} |x-p| \geq \frac{1}{|P|}\sum_{p \in U} |x-\median| = \frac{|U|}{|P|}|x-\median| = \frac{1}{2}|x-\median|$,
where the last equality uses $|U|=\frac{1}{2}|P|$. 

The general case $d\ge 1$ follows immediately 
because $\avgcost(x,P) = \sum_{i=1}^d \avgcost_i(x,P)$,
where $\avgcost_i(x,P) := \frac{1}{|P|}\sum_{p \in P} \big| x[i]-p[i] \big|$.
This completes the proof of \Cref{lemma:mainlemma}.
\end{proof}

\section{Proofs omitted from Section~\ref{section:applications}}
\label{app:applications}

\paragraph{Stable coresets in Kendall-tau metric.}
To illustrate one concrete example of Theorem~\ref{thm:main} in $\ell_1$ related metric space, consider the Kendall-tau metric on permutations. The  Kemeny embedding maps each permutation $\sigma \in \mathcal{S}_d$ to a binary vector in $\{0,1\}^{\binom{d}{2}}$ where $\phi(\sigma)[i,j] = \mathbbm{1}_{\sigma[i]<\sigma[j]}$. Applying Proposition~\ref{prop:isometric} with Theorem~\ref{thm:main} yields:

\begin{corollary}
\label{cor:Kendall_tau}
Let $P\subset\mathcal{S}_d$ be finite and let $\epsilon \in (0, \frac{1}{5})$.
Then a uniform sample of size $O(\epsilon^{-2} \log d)$ from $P$ 
is a stable $(\epsilon/6, 4\epsilon)$-coreset for $1$-median in $\mathcal{S}_d$ under the Kendall-tau metric with probability at least $4/5$.
\end{corollary}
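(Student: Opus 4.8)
The plan is to obtain this corollary as a direct instantiation of \Cref{thm:main} via the isometric-embedding machinery of \Cref{prop:isometric}, so almost all the work reduces to verifying that the Kemeny embedding $\phi:\mathcal{S}_d \to \{0,1\}^{\binom{d}{2}}$ is an isometry from $(\mathcal{S}_d, \text{Kendall-tau})$ into $\ell_1^{\binom{d}{2}}$. First I would recall the definition of Kendall-tau distance between two permutations $\sigma,\pi \in \mathcal{S}_d$ as the number of pairs $\{i,j\}$ on which they disagree in relative order, i.e.\ the number of discordant pairs. The key observation is that for each unordered pair $\{i,j\}$, the coordinate $\phi(\sigma)[i,j] = \mathbbm{1}_{\sigma[i]<\sigma[j]}$ records the relative order of $i$ and $j$ under $\sigma$, so $\sigma$ and $\pi$ disagree on the pair $\{i,j\}$ exactly when $\phi(\sigma)[i,j] \neq \phi(\pi)[i,j]$, which contributes precisely $1$ to $\normlone{\phi(\sigma)-\phi(\pi)}$.

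The main step is therefore to establish that
\[
  \forall \sigma,\pi \in \mathcal{S}_d,
  \qquad
  \text{Kendall-tau}(\sigma,\pi)
  = \sum_{\{i,j\}} \mathbbm{1}_{\{\phi(\sigma)[i,j] \neq \phi(\pi)[i,j]\}}
  = \normlone{\phi(\sigma)-\phi(\pi)} ,
\]
where the last equality uses that $\phi$ takes values in $\{0,1\}^{\binom{d}{2}}$, so each coordinate difference is either $0$ or $1$ in absolute value and the $\ell_1$ norm simply counts disagreements. This is the defining property of an isometric embedding, so $\phi$ satisfies the hypothesis of \Cref{prop:isometric} with $\Xspace_1 = \mathcal{S}_d$ under Kendall-tau and $\Xspace_2 = \ell_1^{\binom{d}{2}}$.

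Once the isometry is in hand, I would invoke \Cref{thm:main} on the image $\phi(P) \subset \ell_1^{\binom{d}{2}}$: since the ambient dimension is $\binom{d}{2} = O(d^2)$, a uniform sample of size $O(\epsilon^{-2}\log\binom{d}{2}) = O(\epsilon^{-2}\log d)$ is a stable $(\epsilon/6,4\epsilon)$-coreset of $\phi(P)$ with probability at least $4/5$. A uniform sample $Q\subseteq P$ maps under the injective $\phi$ (injectivity from \Cref{fact:isometric}\ref{it:aa}) to a uniform sample $\phi(Q)\subseteq\phi(P)$ of the same size, so $\phi(Q)$ is a stable coreset of $\phi(P)$ with the stated probability. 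Finally, applying \Cref{prop:isometric} in the contrapositive direction transfers the stable-coreset guarantee back from $\phi(Q)$ of $\phi(P)$ to $Q$ of $P$ in the original Kendall-tau metric, yielding exactly the claimed parameters.

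The only genuine content is the isometry verification, and even that is routine once one expands the definitions; the main thing to be careful about is the bookkeeping that the dimension blowup from $d$ to $\binom{d}{2}$ only affects the sample size through a $\log$, so that $\log\binom{d}{2} = \Theta(\log d)$ preserves the $O(\epsilon^{-2}\log d)$ bound. I do not anticipate a substantive obstacle, since \Cref{prop:isometric} was designed precisely to make such transfers immediate.
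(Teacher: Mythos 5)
Your proposal is correct and takes essentially the same route as the paper: the paper also derives \Cref{cor:Kendall_tau} by observing that the Kemeny embedding $\phi(\sigma)[i,j]=\mathbbm{1}_{\sigma[i]<\sigma[j]}$ is an isometry into $\ell_1^{\binom{d}{2}}$ and then combining \Cref{prop:isometric} with \Cref{thm:main}, the dimension blowup to $\binom{d}{2}=O(d^2)$ being absorbed by the logarithm in the sample-size bound. Your write-up merely makes explicit the isometry verification and the fact that a uniform sample of $P$ corresponds under the injective $\phi$ to a uniform sample of $\phi(P)$, details the paper leaves implicit.
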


We emphasize that only the existence of an embedding into $\ell_1$ is necessary; the explicit form of this embedding need not be known to derive these coreset guarantees.

\subsection{Low-distortion embeddings}
\label{app:almostisometric}
\almostisom*

\begin{proof}
    Let $x,y \in \Xspace_1$ such that $\cost(x,Q) \leq (1+\epsilon')\cost(y,Q)$.
    Using the fact that $f$ has distortion $D^2$ we obtain
    \begin{align*}
        \cost (f(x),f(Q)) &\leq Dr\cost(x,Q) \leq Dr(1+\epsilon') \cost(y,Q) \\ 
        &\leq D^2 (1+\epsilon') \cost(f(y),f(Q)) \leq (1+\epsilon)\cost(f(y),f(Q)).
    \end{align*}
    Where the last inequality follows by our choice of $\epsilon'$. Since $f(Q)$ is stable $(\epsilon,\eta)$-coreset it follows that $\cost (f(x),f(P)) \leq (1+\eta) \cost(f(y),f(P))$. Using the properties of $f$ again
    \begin{align*}
        \cost(x,P) &\leq \frac{D}{r}\cost(f(x),f(P)) \leq \frac{D}{r}(1+\eta) \cost(f(y),f(P)) \\
        &\leq D^2 (1+\eta) \cost(y,P) \leq (1+\eta')\cost(y,P) ,
    \end{align*}
    where the last inequality follows by our choice of $\eta'$.
\end{proof}

A \emph{terminal embedding} with distortion $D^2 \ge 1$ between metric spaces
$(\mathcal{X}_1, \dist_1)$ and $(\mathcal{X}_2, \dist_2)$ with respect to $P$ is a map $f: \Xspace_1 \rightarrow \Xspace_2$, such that for some $r>0$, 
\begin{equation}\label{eq:terminal_embedding}
  \forall p\in P, \forall x \in \Xspace_1,
  \qquad
  \tfrac{1}D\cdot \dist_2(f(p),f(x))
  \leq r\cdot \text{dist}_1(p,x)
  \leq D \cdot \dist_2(f(p),f(x)).
\end{equation}

Clearly this is a weaker guarantee. Note that the proof of Proposition~\ref{prop:almostisom} holds immediately for this definition as well.
Moreover, it is worth noting that in many practical scenarios, the distortion in Equation~\ref{eq:almost_embedding} is one-sided in which case the error propagation would occur only once, improving the approximation guarantees in Proposition~\ref{prop:almostisom}.

\subsection{Stable coresets in Euclidean metric}\label{app:euclidean}

Gordon refined Dvoretzky's Theorem and showed that
$\ell_2^d$ embeds with distortion $1+\epsilon$ into $\ell_1^{O(\epsilon^{-2} d)}$ 
\cite{Gordon1988, schechtman2006remark}.
Thus, Corollary~\ref{corollary:almostell1} implies the following.

\begin{corollary}
Let $P \subseteq \Rd$ be finite and $\epsilon\in(0,\frac{1}{10})$.
Then a uniform sample of size $O(\epsilon^{-2}\log (d/\epsilon))$ from $P$
is a stable $(\epsilon,O(\epsilon))$-coreset for $1$-median in $\ell_2^d$
with probability at least $\frac{4}{5}$.
\end{corollary}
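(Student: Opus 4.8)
The plan is to obtain this corollary by composing a near-isometric embedding of $\ell_2^d$ into $\ell_1$ with the general transfer result of Corollary~\ref{corollary:almostell1}, so that the uniform sample drawn directly from $P \subseteq \Rd$ inherits a stable coreset guarantee from its image in $\ell_1$. As in the Kendall-tau application, the embedding will be used only in the analysis: the coreset is a uniform sample of $P$ taken in the original space $\ell_2^d$, so the explicit form of the embedding is never needed.

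First I would invoke Gordon's refinement of Dvoretzky's theorem \cite{Gordon1988, schechtman2006remark}, which gives an embedding of $\ell_2^d$ into $\ell_1^{m}$ of distortion $1+\zeta$ with target dimension $m = O(\zeta^{-2} d)$, for any $\zeta \in (0,1)$. To meet the hypothesis of Corollary~\ref{corollary:almostell1}, I would set $\zeta = \epsilon/3$, producing an embedding of distortion $1 + \epsilon/3$ into $\ell_1^{m}$ with $m = O((\epsilon/3)^{-2} d) = O(\epsilon^{-2} d)$. Since $\epsilon \in (0,\frac{1}{10})$, the parameter range required by Corollary~\ref{corollary:almostell1} is satisfied, and applying it with $\Xspace = \ell_2^d$ and target $\ell_1^{m}$ yields that a uniform sample of size $O(\epsilon^{-2}\log m)$ from $P$ is a stable $(\epsilon, O(\epsilon))$-coreset for $1$-median in $\ell_2^d$ with probability at least $\frac{4}{5}$.

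It then remains only to simplify the sample size. Substituting $m = O(\epsilon^{-2} d)$ gives $\log m = \log d + 2\log(1/\epsilon) + O(1) = O(\log(d/\epsilon))$, where I use $\epsilon < \frac{1}{10}$ so that $\log(1/\epsilon)$ is bounded by a constant multiple of $\log(d/\epsilon)$; hence $O(\epsilon^{-2}\log m) = O(\epsilon^{-2}\log(d/\epsilon))$, matching the claimed bound. The argument is entirely routine once Gordon's theorem and Corollary~\ref{corollary:almostell1} are in hand, and there is no genuine obstacle beyond the bookkeeping of matching the distortion parameter to $\epsilon/3$ and verifying that the logarithm of the blown-up $\ell_1$ dimension $m$ remains $O(\log(d/\epsilon))$.
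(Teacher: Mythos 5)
Your proposal is correct and follows essentially the same route as the paper: the paper likewise invokes Gordon's refinement of Dvoretzky's theorem to embed $\ell_2^d$ with distortion $1+O(\epsilon)$ into $\ell_1^{O(\epsilon^{-2}d)}$ and then applies Corollary~\ref{corollary:almostell1}, with the sample-size bound $O(\epsilon^{-2}\log(d/\epsilon))$ coming from the logarithm of the blown-up dimension exactly as in your bookkeeping. Your write-up is in fact more explicit than the paper's two-sentence proof about matching the distortion parameter $\epsilon/3$ to the hypothesis of Corollary~\ref{corollary:almostell1}.
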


This result provides a different tradeoff than prior work \cite{DBLP:conf/icml/HuangJL23}, 
which showed that a uniform sampling of size $\Tilde{O}(\epsilon^{-3})$ 
yields a weak $(\epsilon,O(\epsilon))$-coreset. 

\subsection{\texorpdfstring{$C$}{C}-\instanceproperty{} instances}

\boundedinstance*

\begin{proof}
As before, let $\median$ denote a median of $P$. 
By the triangle inequality, 
\begin{align*}
  \max_{x,y \in P}\normlone{x-y} 
  \leq \frac{C}{|P|^2}\sum_{x,y \in P}\normlone{x -\mu+\mu -y}
  \leq \frac{2C}{|P|}\sum_{x \in P}\normlone{\median-x} = 2C \avgopt(P), 
\end{align*}
and thus for every $x \in P$, 
\begin{align*}
    \normlone{\median-x} &= \frac{1}{|P|}\sum_{y \in P} \normlone{\median -y + y-x} \leq \frac{1}{|P|}\sum_{y \in P} (\normlone{\median -y} + \normlone{y-x}) \\
    &\leq \avgopt (P) + 2C \avgopt(P) \leq (2C+1)\avgopt (P) .
\end{align*}

Additionally, $\EX[\avgcost (\median,Q)] = \avgcost (\median,P) = \avgopt (P)$, and thus Chernoff's bound implies 
\begin{align*}
    \Pr \Big[\Big| \avgcost (\median,Q) - \avgcost (\median,P) \Big| \geq \frac{\epsilon}{2}\avgcost (\median,P) \Big] 
    \leq 2\exp\left(-\frac{\epsilon^2|Q|}{3(2C+1)}\right) 
    \leq \frac{\delta}{2} .
\end{align*}

Following Corollary~\ref{cor:approx} and Lemma~\ref{lemma:mainlemma}, with probability at least $1-\frac{\delta}{2}$, the subset $Q$ is $\frac{\epsilon}{2}$-RCDA. By the union bound, with probability at least $1-\delta$, we have that $Q$ is $\frac{\epsilon}{2}$-RCDA and $\abs{\avgcost(\median ,P) - \avgcost (\median,Q)} \leq \frac{\epsilon}{2} \avgcost (\median, P)$.
In this case, for every $x\in \Rd$, 
\begin{align*}
&\abs{\avgcost(x,P) - \avgcost(x,Q)} 
\\
&\leq  \abs{[\avgcost(x,P) - \avgcost(\median,P)] - [\avgcost(x,Q) - \avgcost(\median,Q)]} + \abs{\avgcost(\median,P) - \avgcost(\median,Q)} 
\\
&\leq \frac{\epsilon}{2} \avgcost(x,P) + \frac{\epsilon}{2} \avgcost(\median,P) 
\leq \epsilon \avgcost(x,P) ,
\end{align*}
which establishes that $Q$ is a strong $\epsilon$-coreset.
\end{proof}


\begin{thebibliography}{{NYC}24}

\bibitem[{10x}19]{10xgenomics2019pbmc}
{10x Genomics}.
\newblock {PBMCs} from a healthy donor - protein (v3 chemistry).
\newblock Single Cell Gene Expression Dataset by Cell Ranger v3.0.0, 2019.
\newblock Accessed: March 2024.
\newblock URL: \url{https://support.10xgenomics.com/single-cell-gene-expression/datasets/3.0.0/pbmc_10k_protein_v3}.

\bibitem[ABS10]{DBLP:journals/talg/AckermannBS10}
Marcel~R. Ackermann, Johannes Bl{\"{o}}mer, and Christian Sohler.
\newblock Clustering for metric and nonmetric distance measures.
\newblock {\em {ACM} Trans. Algorithms}, 6(4):59:1--59:26, 2010.
\newblock \href {https://doi.org/10.1145/1824777.1824779} {\path{doi:10.1145/1824777.1824779}}.

\bibitem[BCFM98]{DBLP:conf/stoc/BroderCFM98}
Andrei~Z. Broder, Moses Charikar, Alan~M. Frieze, and Michael Mitzenmacher.
\newblock Min-wise independent permutations.
\newblock In {\em {STOC}}, pages 327--336. {ACM}, 1998.
\newblock \href {https://doi.org/10.1145/276698.276781} {\path{doi:10.1145/276698.276781}}.

\bibitem[BCJ{\etalchar{+}}22]{DBLP:conf/focs/BravermanCJKST022}
Vladimir Braverman, Vincent Cohen{-}Addad, Shaofeng~H.{-}C. Jiang, Robert Krauthgamer, Chris Schwiegelshohn, Mads~Bech Toftrup, and Xuan Wu.
\newblock The power of uniform sampling for coresets.
\newblock In {\em {FOCS}}, pages 462--473. {IEEE}, 2022.
\newblock \href {https://doi.org/10.1109/FOCS54457.2022.00051} {\path{doi:10.1109/FOCS54457.2022.00051}}.

\bibitem[CDKS22]{DBLP:conf/nips/ChakrabortyD0S22}
Diptarka Chakraborty, Syamantak Das, Arindam Khan, and Aditya Subramanian.
\newblock Fair rank aggregation.
\newblock In {\em NeurIPS}, volume~35, pages 23965--23978. Curran Associates, Inc., 2022.
\newblock URL: \url{https://papers.nips.cc/paper_files/paper/2022/hash/974309ef51ebd89034adc64a57e304f2-Abstract-Conference.html}.

\bibitem[CDR{\etalchar{+}}25]{DBLP:conf/soda/Cohen-AddadD0SS25}
Vincent Cohen{-}Addad, Andrew Draganov, Matteo Russo, David Saulpic, and Chris Schwiegelshohn.
\newblock A tight {VC}-dimension analysis of clustering coresets with applications.
\newblock In {\em {SODA}}, pages 4783--4808. {SIAM}, 2025.
\newblock \href {https://doi.org/10.1137/1.9781611978322.162} {\path{doi:10.1137/1.9781611978322.162}}.

\bibitem[CFS21]{DBLP:journals/jacm/Cohen-AddadFS21}
Vincent Cohen{-}Addad, Andreas~Emil Feldmann, and David Saulpic.
\newblock Near-linear time approximation schemes for clustering in doubling metrics.
\newblock {\em J. {ACM}}, 68(6):44:1--44:34, 2021.
\newblock \href {https://doi.org/10.1145/3477541} {\path{doi:10.1145/3477541}}.

\bibitem[CGJK25]{DBLP:conf/innovations/CarmelGJK25}
Amir Carmel, Chengzhi Guo, Shaofeng~H.{-}C. Jiang, and Robert Krauthgamer.
\newblock Coresets for $1$-center in $\ell_1$ metrics.
\newblock In {\em {ITCS}}, volume 325 of {\em LIPIcs}, pages 28:1--28:20. Schloss Dagstuhl - Leibniz-Zentrum f{\"{u}}r Informatik, 2025.
\newblock \href {https://doi.org/10.4230/LIPICS.ITCS.2025.28} {\path{doi:10.4230/LIPICS.ITCS.2025.28}}.

\bibitem[CGS18]{chan2018twitter}
T-H.~Hubert Chan, Arnaud Guerquin, and Mauro Sozio.
\newblock Twitter data set.
\newblock GitHub repository, 2018.
\newblock Accessed: March 2024.
\newblock URL: \url{https://github.com/fe6Bc5R4JvLkFkSeExHM/k-center}.

\bibitem[Che09]{DBLP:journals/siamcomp/Chen09}
Ke~Chen.
\newblock On coresets for k-median and k-means clustering in metric and euclidean spaces and their applications.
\newblock {\em {SIAM} J. Comput.}, 39(3):923--947, 2009.
\newblock \href {https://doi.org/10.1137/070699007} {\path{doi:10.1137/070699007}}.

\bibitem[CKL22]{DBLP:conf/soda/Cohen-AddadSL22}
Vincent Cohen{-}Addad, {Karthik {C. S.}}, and Euiwoong Lee.
\newblock Johnson coverage hypothesis: Inapproximability of $k$-means and $k$-median in $\ell_p$-metrics.
\newblock In {\em {SODA}}, pages 1493--1530. {SIAM}, 2022.
\newblock \href {https://doi.org/10.1137/1.9781611977073.63} {\path{doi:10.1137/1.9781611977073.63}}.

\bibitem[CSS21]{DBLP:conf/nips/Cohen-AddadSS21}
Vincent Cohen{-}Addad, David Saulpic, and Chris Schwiegelshohn.
\newblock Improved coresets and sublinear algorithms for power means in {Euclidean} spaces.
\newblock In {\em NeurIPS}, pages 21085--21098. Curran Associates, Inc., 2021.
\newblock URL: \url{https://proceedings.neurips.cc/paper/2021/hash/b035d6563a2adac9f822940c145263ce-Abstract.html}.

\bibitem[Dan21]{danos21}
Matan Danos.
\newblock Coresets for clustering by uniform sampling and generalized rank aggregation.
\newblock Master's thesis, Weizmann Institute of Science, Rehovot, Israel, 2021.
\newblock URL: \url{https://www.wisdom.weizmann.ac.il/~robi/files/MatanDanos-MScThesis-2021_11.pdf}.

\bibitem[DKNS01]{dwork2001rank}
Cynthia Dwork, Ravi Kumar, Moni Naor, and Dandapani Sivakumar.
\newblock Rank aggregation methods for the web.
\newblock In {\em Proceedings of the 10th international conference on World Wide Web}, pages 613--622. ACM, 2001.
\newblock \href {https://doi.org/10.1145/371920.372165} {\path{doi:10.1145/371920.372165}}.

\bibitem[Fel20]{DBLP:journals/widm/Feldman20}
Dan Feldman.
\newblock Core-sets: An updated survey.
\newblock {\em WIREs Data Mining Knowl. Discov.}, 10(1), 2020.
\newblock \href {https://doi.org/10.1002/WIDM.1335} {\path{doi:10.1002/WIDM.1335}}.

\bibitem[Gey18]{gey2018vapnik}
Servane Gey.
\newblock {Vapnik-Chervonenkis} dimension of axis-parallel cuts.
\newblock {\em Communications in Statistics-Theory and Methods}, 47(9):2291--2296, 2018.

\bibitem[GI03]{DBLP:conf/soda/GuruswamiI03}
Venkatesan Guruswami and Piotr Indyk.
\newblock Embeddings and non-approximability of geometric problems.
\newblock In {\em SODA}, page 537–538. SIAM, 2003.
\newblock URL: \url{https://dl.acm.org/doi/10.5555/644108.644198}.

\bibitem[Gor88]{Gordon1988}
Yehoram Gordon.
\newblock Gaussian processes and almost spherical sections of convex bodies.
\newblock {\em The Annals of Probability}, 16(1):180--188, 1988.

\bibitem[HJL23]{DBLP:conf/icml/HuangJL23}
Lingxiao Huang, Shaofeng~H.{-}C. Jiang, and Jianing Lou.
\newblock The power of uniform sampling for $k$-median.
\newblock In {\em {ICML}}, volume 202 of {\em Proceedings of Machine Learning Research}, pages 13933--13956. {PMLR}, 2023.
\newblock URL: \url{https://proceedings.mlr.press/v202/huang23j.html}.

\bibitem[HJLW23]{HuangJL023}
Lingxiao Huang, Shaofeng~H.{-}C. Jiang, Jianing Lou, and Xuan Wu.
\newblock Near-optimal coresets for robust clustering.
\newblock In {\em The Eleventh International Conference on Learning Representations, {ICLR} 2023}. OpenReview.net, 2023.
\newblock URL: \url{https://openreview.net/forum?id=Nc1ZkRW8Vde}.

\bibitem[HLLW25]{DBLP:conf/soda/Huang0L025}
Lingxiao Huang, Jian Li, Pinyan Lu, and Xuan Wu.
\newblock Coresets for constrained clustering: General assignment constraints and improved size bounds.
\newblock In {\em {SODA}}, pages 4732--4782. {SIAM}, 2025.
\newblock \href {https://doi.org/10.1137/1.9781611978322.161} {\path{doi:10.1137/1.9781611978322.161}}.

\bibitem[HLW24]{DBLP:conf/stoc/Huang0024}
Lingxiao Huang, Jian Li, and Xuan Wu.
\newblock On optimal coreset construction for {Euclidean} $(k,z)$-clustering.
\newblock In {\em {STOC}}, pages 1594--1604. {ACM}, 2024.
\newblock \href {https://doi.org/10.1145/3618260.3649707} {\path{doi:10.1145/3618260.3649707}}.

\bibitem[HS11]{DBLP:journals/dcg/Har-PeledS11}
Sariel Har{-}Peled and Micha Sharir.
\newblock Relative ($p$,$\epsilon$)-approximations in geometry.
\newblock {\em Discrete \& Computational Geometry}, 45(3):462--496, 2011.
\newblock \href {https://doi.org/10.1007/s00454-010-9248-1} {\path{doi:10.1007/s00454-010-9248-1}}.

\bibitem[Ind99]{Indyk99}
Piotr Indyk.
\newblock Sublinear time algorithms for metric space problems.
\newblock In {\em Proceedings of the Thirty-First Annual ACM Symposium on Theory of Computing}, page 428–434. ACM, 1999.
\newblock \href {https://doi.org/10.1145/301250.301366} {\path{doi:10.1145/301250.301366}}.

\bibitem[Ind01]{indyk2001high}
Piotr Indyk.
\newblock {\em High-dimensional computational geometry}.
\newblock PhD thesis, Stanford University, 2001.
\newblock URL: \url{https://people.csail.mit.edu/indyk/thesis.html}.

\bibitem[JKLZ24]{DBLP:journals/ml/JiangKLZ24}
Shaofeng~H.{-}C. Jiang, Robert Krauthgamer, Jianing Lou, and Yubo Zhang.
\newblock Coresets for kernel clustering.
\newblock {\em Mach. Learn.}, 113(8):5891--5906, 2024.
\newblock \href {https://doi.org/10.1007/S10994-024-06540-Z} {\path{doi:10.1007/S10994-024-06540-Z}}.

\bibitem[KR20]{DBLP:journals/pvldb/KuhlmanR20}
Caitlin Kuhlman and Elke~A. Rundensteiner.
\newblock Rank aggregation algorithms for fair consensus.
\newblock {\em Proc. {VLDB} Endow.}, 13(11):2706--2719, 2020.
\newblock \href {https://doi.org/10.14778/3407790.3407855} {\path{doi:10.14778/3407790.3407855}}.

\bibitem[KSS04]{DBLP:conf/focs/KumarSS04}
Amit Kumar, Yogish Sabharwal, and Sandeep Sen.
\newblock A simple linear time (1+{\(\acute{\epsilon}\)})-approximation algorithm for $k$-means clustering in any dimensions.
\newblock In {\em {FOCS}}, pages 454--462. {IEEE} Computer Society, 2004.
\newblock \href {https://doi.org/10.1109/FOCS.2004.7} {\path{doi:10.1109/FOCS.2004.7}}.

\bibitem[KSSJ17]{kaur2017comparative}
Parneet Kaur, Manpreet Singh, and Gurpreet Singh~Josan.
\newblock Comparative analysis of rank aggregation techniques for metasearch using genetic algorithm.
\newblock {\em Education and Information Technologies}, 22(3):965--983, 2017.
\newblock \href {https://doi.org/10.1007/s10639-016-9467-z} {\path{doi:10.1007/s10639-016-9467-z}}.

\bibitem[LLS01]{DBLP:journals/jcss/LiLS01}
Yi~Li, Philip~M. Long, and Aravind Srinivasan.
\newblock Improved bounds on the sample complexity of learning.
\newblock {\em J. Comput. Syst. Sci.}, 62(3):516--527, 2001.
\newblock \href {https://doi.org/10.1006/jcss.2000.1741} {\path{doi:10.1006/jcss.2000.1741}}.

\bibitem[MF19]{DBLP:conf/nips/MaromF19}
Yair Marom and Dan Feldman.
\newblock $k$-means clustering of lines for big data.
\newblock In {\em NeurIPS}, pages 12797--12806, 2019.
\newblock URL: \url{https://proceedings.neurips.cc/paper/2019/hash/6084e82a08cb979cf75ae28aed37ecd4-Abstract.html}.

\bibitem[MS18]{DBLP:journals/ki/MunteanuS18}
Alexander Munteanu and Chris Schwiegelshohn.
\newblock Coresets-methods and history: {A} theoreticians design pattern for approximation and streaming algorithms.
\newblock {\em K{\"{u}}nstliche Intell.}, 32(1):37--53, 2018.
\newblock \href {https://doi.org/10.1007/S13218-017-0519-3} {\path{doi:10.1007/S13218-017-0519-3}}.

\bibitem[{NYC}24]{nyctlc2024}
{NYC Taxi and Limousine Commission}.
\newblock {TLC} trip record data.
\newblock NYC.gov Official Website, 2024.
\newblock Accessed: March 2024.
\newblock URL: \url{https://www.nyc.gov/site/tlc/about/tlc-trip-record-data.page}.

\bibitem[Phi17]{Phillips17}
Jeff~M. Phillips.
\newblock Coresets and sketches.
\newblock In {\em Handbook of discrete and computational geometry}, chapter~48, pages 1269--1288. Chapman and Hall/CRC, 3rd edition, 2017.
\newblock \href {https://doi.org/10.1201/9781315119601} {\path{doi:10.1201/9781315119601}}.

\bibitem[PPM{\etalchar{+}}22]{patro2022fair}
Gourab~K Patro, Lorenzo Porcaro, Laura Mitchell, Qiuyue Zhang, Meike Zehlike, and Nikhil Garg.
\newblock Fair ranking: a critical review, challenges, and future directions.
\newblock In {\em Proceedings of the 2022 ACM conference on fairness, accountability, and transparency}, pages 1929--1942. ACM, 2022.
\newblock \href {https://doi.org/10.1145/3531146.3533238} {\path{doi:10.1145/3531146.3533238}}.

\bibitem[PSK22]{pitoura2022fairness}
Evaggelia Pitoura, Kostas Stefanidis, and Georgia Koutrika.
\newblock Fairness in rankings and recommendations: an overview.
\newblock {\em The VLDB Journal}, 31(3):431--458, 2022.
\newblock \href {https://doi.org/10.1007/s00778-021-00697-y} {\path{doi:10.1007/s00778-021-00697-y}}.

\bibitem[Sch06]{schechtman2006remark}
Gideon Schechtman.
\newblock A remark concerning the dependence on $\epsilon$ in dvoretzky's theorem.
\newblock In {\em Geometric Aspects of Functional Analysis: Israel Seminar (GAFA) 1987--88}, pages 274--277. Springer, 2006.

\bibitem[Tre00]{DBLP:journals/siamcomp/Trevisan00}
Luca Trevisan.
\newblock When {Hamming} meets {Euclid}: {The} approximability of geometric {TSP} and {Steiner} tree.
\newblock {\em {SIAM} J. Comput.}, 30(2):475--485, 2000.
\newblock \href {https://doi.org/10.1137/S0097539799352735} {\path{doi:10.1137/S0097539799352735}}.

\bibitem[TZL{\etalchar{+}}21]{Tian2021}
Tian Tian, Jie Zhang, Xiang Lin, Zhi Wei, and Hakon Hakonarson.
\newblock Model-based deep embedding for constrained clustering analysis of single cell {RNA}-seq data.
\newblock {\em Nature Communications}, 12, 2021.
\newblock \href {https://doi.org/10.1038/s41467-021-22008-3} {\path{doi:10.1038/s41467-021-22008-3}}.

\bibitem[TZS09]{DBLP:journals/bmcbi/TannierZS09}
Eric Tannier, Chunfang Zheng, and David Sankoff.
\newblock Multichromosomal median and halving problems under different genomic distances.
\newblock {\em {BMC} Bioinform.}, 10, 2009.
\newblock \href {https://doi.org/10.1186/1471-2105-10-120} {\path{doi:10.1186/1471-2105-10-120}}.

\bibitem[Val20]{valdivieso2020anime}
Hernan Valdivieso.
\newblock Anime recommendation database 2020.
\newblock Kaggle dataset, 2020.
\newblock Accessed: March 2024.
\newblock URL: \url{https://www.kaggle.com/datasets/hernan4444/anime-recommendation-database-2020/data}.

\bibitem[VC71]{vapnik1971uniform}
Vladimir~N Vapnik and A~Ya Chervonenkis.
\newblock On the uniform convergence of relative frequencies of events to their probabilities.
\newblock {\em Theory of Probability \& Its Applications}, 16(2):264--280, 1971.
\newblock English translation by B. Seckler of the original Russian paper published in Dokl. Akad. Nauk SSSR, 181(4):781--783, 1968.
\newblock \href {https://doi.org/10.1137/1116025} {\path{doi:10.1137/1116025}}.

\bibitem[ZLH{\etalchar{+}}23]{Zeng00P0023}
Pengxin Zeng, Yunfan Li, Peng Hu, Dezhong Peng, Jiancheng Lv, and Xi~Peng.
\newblock Deep fair clustering via maximizing and minimizing mutual information: Theory, algorithm and metric.
\newblock In {\em {IEEE/CVF} Conference on Computer Vision and Pattern Recognition, {CVPR} 2023}, pages 23986--23995. {IEEE}, 2023.
\newblock \href {https://doi.org/10.1109/CVPR52729.2023.02297} {\path{doi:10.1109/CVPR52729.2023.02297}}.

\end{thebibliography}
\end{document}